\def\Tr{\operatorname{Tr}}
\def\sq{\operatorname{sq}}
\def\supp{\operatorname{supp}}
\def\GHZ{\operatorname{GHZ}}
\def\id{\operatorname{id}}
\def\({\left(}
\def\){\right)}
\def\[{\left[}
\def\]{\right]}
\newcommand{\mc}[1]{\mathcal{#1}}
\newtheorem{theorem}{Theorem}
\newtheorem{note}{Note}
\newtheorem{corollary}{Corollary}
\newtheorem{definition}{Definition}
\newtheorem{lemma}{Lemma}
\newtheorem{observation}{Observation}
\newtheorem{proposition}{Proposition}
\newtheorem{remark}{Remark}
\def\>{\rangle}
\def\<{\langle}
\begin{document}

\widetext

\title{Fundamental limitations on the device-independent quantum conference key agreement}

\author{Karol Horodecki}
\affiliation{Institute of Informatics, National Quantum Information Centre, Faculty of Mathematics,
Physics and Informatics, University of Gda\'nsk, Wita Stwosza 57, 80-308 Gda\'nsk, Poland}
\affiliation{International Centre for Theory of Quantum Technologies, University of Gdańsk,
80-308 Gdańsk, Poland}
\affiliation{School of Electrical and Computer Engineering, Cornell University, Ithaca, New York 14850, USA}
\author{Marek Winczewski}
\affiliation{International Centre for Theory of Quantum Technologies, University of Gdańsk,
80-308 Gdańsk, Poland}
\affiliation{Institute of Theoretical Physics and Astrophysics, National Quantum Information Centre,
Faculty of Mathematics, Physics and Informatics, University of Gda\'nsk, Wita Stwosza 57, 80-308 Gda\'nsk, Poland}
\author{Siddhartha Das}
\affiliation{Centre for Quantum Information and  Communication (QuIC), \'{E}cole polytechnique de Bruxelles,   Universit\'{e} libre de Bruxelles, Brussels, B-1050, Belgium}
\affiliation{Center for Security, Theory and Algorithmic Research, International Institute of Information Technology, Hyderabad, Gachibowli, Telangana 500032, India}

\date{\today}
\begin{abstract}
We provide several general upper bounds on the rate of a key secure against a quantum adversary in the device-independent conference key agreement (DI-CKA)  scenario. They include bounds by reduced entanglement measures and those based on multipartite secrecy monotones such as a multipartite squashed entanglement-based measure, which we refer to as reduced c-squashed entanglement. We compare the latter bound with the known lower bound for the protocol of conference key distillation based on the parity Clauser-Horne-Shimony-Holt  game. We also show that the gap between the DI-CKA rate and the device-dependent rate is inherited from the bipartite gap between device-independent and device-dependent key rates, giving examples that exhibit the strict gap.
\end{abstract}

\maketitle

\section{Introduction}
Building a quantum secure  internet is one of the most important challenges in the field of quantum technologies \cite{DM03,Wehner2018}.
It would ensure worldwide information-theoretically secure communication.
The idea of quantum repeaters \cite{Dr1999,Muralidharan2016,ZXC+18} gives hope that this dream will come true.
However, the level of quantum security proposed originally in a seminal article by Bennett and Brassard \cite{BB84} seems to be insufficient due to the fact that on the way between an honest manufacturer and an honest user, an active hacker can change with the inner workings of a quantum device, making it totally insecure \cite{Pironio2009}. Indeed, the hardware Trojan-horse attacks on random number generators are known \cite{Becker2013}, and the active hacking on quantum devices became a standard testing approach since the seminal attack by Makarov \cite{Makarov2009}. The idea of device-independent (DI) security overcomes this obstacle \cite{E91,Pironio2009} (see also \cite{Bell-nonlocality} and references therein).
Although difficult to be done in practice, it has been demonstrated quite recently in several recent experiments \cite{Harald_exp,Renner_exp,JWP_exp}. 

In parallel, the study of the limitations of this approach in terms of upper bounds on the distillable key has been put forward \cite{KW17,CFH21,FBL+21,KHD21}. However, these approaches focus on point-to-point quantum device-independent secure communication. In this paper  we introduce the upper bounds on the performance of the device-independent conference key agreement (DI-CKA) \cite{Murta2020,RMW18}. The task of the conference agreement is to distribute to $N>2$ honest parties the same secure key for one-time-pad encryption. 
A protocol achieving this task in a device-independent manner has been shown in Ref.~\cite{RMW18}.  We set an upper bound on the performance of such protocols in a network setting. 

We focus on physical behaviors with $N$ users (for arbitrary $N>2$), where each user is both the sender and receiver of the behavior treated as a black box. This situation is a special case of a network describable with a multiplex quantum channel where inputs and outputs are classical with quantum phenomena going inside the physical behavior~\cite{DBWH19}. All $N$ trusted parties have the role of both the sender to and receiver from the channel and their goal is to
obtain a secret key in a device-independent way against a quantum adversary. Aiming at upper bounds on the device-independent key, we narrow the consideration to the independent and identically distributed case. In this scenario, the honest parties share $n$ identical devices.
All the $N$ parties set (classical) inputs ${\bf x}=(x_1,\ldots  ,x_N)$ to each of the $n$ shared devices $P({\bf a}|{\bf x})$ and receive (classical) outputs ${\bf a}=(a_1,\ldots  ,a_N)$ from each of them. We restrict our consideration to quantum devices. Such devices are realized by certain measurements ${\cal M}\equiv \otimes_{i=1}^N {\mathrm M}^{x_i}_{a_i}$ on quantum states $\rho_{A_1,\ldots  ,A_N}\equiv \rho_{N(A)}$. We define these devices $(\rho_{N(A)},{\cal M})=\Tr \left[\rho_{N(A)} (\otimes_{i=1}^N {\mathrm M}^{x_i}_{a_i}) \right]$. 

In this work we provide upper bounds on the device-independent conference key distillation rates for arbitrary multipartite states. As the first main result, we introduce a multipartite generalization of the {\it cc-squashed entanglement} provided in Ref.~\cite{AL20} and developed in Ref.~\cite{KHD21}. With a little abuse of notation with respect to that used in Refs.  \cite{CFH21,KHD21}, for the sake of the reader, we will omit the fact that the measure is multipartite as well as reduce the abbreviation $cc$ in its name and here call it
just reduced c-squashed entanglement, denoting it by $E_{sq,dev}^{c}$. We show that $E_{sq,dev}^{c}$ upper bounds the device-independent key rate in the independent and identically distributed setting, achieved by protocols which use a single input to generate the key $\hat{\bf{x}}$, denoted by $K_{DI,dev}^{iid,{\hat{\bf x}}}$. The subscript dev in the notation refers to the fact that the adversary has to mimic the statistics of the honestly implemented device. We then generalize this to the case when only some parameters of the device have to be reproduced by the attack and refer to quantities with the subscript par. Typical parameters are the level of violation of a Bell inequality and the quantum bit error rate. In the above finding, we use the notion of multipartite squashed entanglement given in Ref.~\cite{Yang2009}. Therein, the abbreviation c stands for classical as the systems of the honest parties are classical due to the measurement accordingly to the definition. The bound reads
\begin{align}
&K_{DI,dev}^{iid,\hat{{\bf x}}}(\rho_{N(A)},{\cal M})\leqslant E_{sq,dev}^{c}(\rho_{N(A)},{\cal M})\nonumber\\
&\equiv\inf_{(\sigma_{N(A)},{\cal N})=(\rho_{N(A)},{\cal M})} \nonumber \\
&\frac{1}{ N-1}I(A_1: \ldots  :A_N\downarrow E)_{{\cal N}(\hat{\bf x})\otimes\mathbbm{1}\sigma_{N(A)}}.
\end{align}
In the above $\downarrow$ denotes the action of any channel transforming $E$ to some system $E^\prime$ and $I(A_1:\ldots  :A_N|E)_{\sigma_{N(A)}}= \sum_{i=1}^{N} S(A_i|E)_{\sigma_{N(A)}} - S(A_1,\ldots ,A_N|E)_{\sigma_{N(A)}}$, with $S(X|Y)_{\sigma_{X,Y}}$ the conditional von Neumann entropy of the state $\sigma_{X,Y}$. The quantity $I(A_1:\ldots: A_N|E')$ (after the action of the channel on $E$) is evaluated on the classical-quantum state emerging from the measurement ${\mathrm N}^{\hat{\bf{x}}}_{\bf{a}}$ corresponding to input $\hat{\bf x}$ of the device $(\sigma_{N(A)},{\cal N})$, on systems $A_1,\ldots,  A_N$. Let us note here that, due to the findings of Ref. \cite{KHD21}, for $N=2$ and the case when the system $E^\prime$, as the output of a channel, is classical, the above bound is equal to the intrinsic information given in Ref. \cite{FBL+21} for the case of a single measurement generating the key.
\begin{figure}[t]
    \center{\includegraphics[trim={10.5cm 9.5cm 8cm 5cm},clip,width=10.5cm]{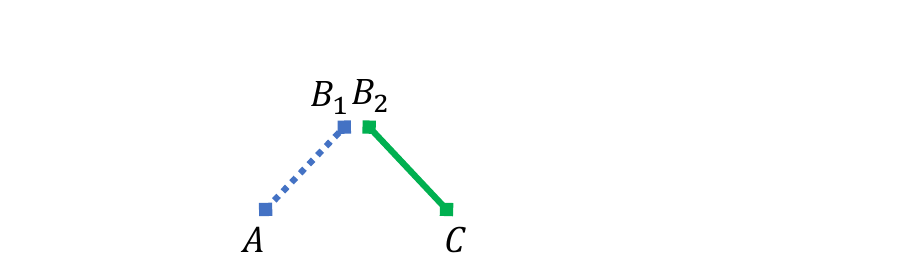}}
    \caption{Depiction of a construction of a tripartite state $\rho_{A(B_1B_2)C}$ with a gap $K_{DI}(\rho_{A(B_1B_2)C}) < K_{DD}(\rho_{A(B_1B_2)C})$ from a state
    $\rho_{AB_1}$ (blue dotted line) satisfying $K^{\downarrow}(\rho_{AB_1})<K_{DD}(\rho_{AB_1})$ (provided in Ref.~\cite{CFH21}) and any state     $\rho_{B_2C}$ (green solid line) satisfying $K_{DD}(\rho_{B_2C})	\geqslant K_{DD}(\rho_{AB_1})$, e.g., the singlet state.
    }
    \label{fig:multi}
\end{figure}

All quantum states considered in this paper are $N$-partite unless it is stated otherwise. Therefore, for the sake of the conciseness of the notation we omit the subscript $N(A)$ in some places.

Our technique is based on the approach of Ref.~\cite{CEHHOR}, where the upper bounds on a key distillable against a quantum adversary via  local operations and public communication (LOPC) were studied. To achieve this, we generalize the upper bound via (quantum) intrinsic information to the case in which the adversary's system can be of infinite dimension. This technical contribution was necessary, as in the case of a device-independent attack, the dimension of the attacking state can be infinite. Indeed, while measurements $\bf x$ produce from the attacking state $\sigma$ finite-dimensional results $\bf a$ yielding a quantum behavior $P({\bf a }|{\bf x})=\Tr \left[ \sigma {\mathrm M}^{{\bf x}}_{{\bf a}} \right]$, the system of the adversary which may hold purification of the state $\sigma$, can still be of infinite dimension.\footnote{In that we have filled in the gap in the proofs of Corollaries 3 and
4 of Ref. \cite{KHD21}, where implicit assumption of the adversary holding
finite-dimensional state has been made.} 

\begin{figure}[t]
    \center{\includegraphics[trim=0cm 0cm 0cm 0cm,width=8.5cm]{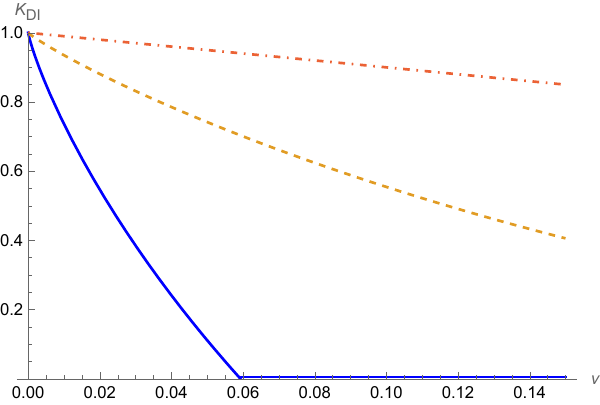}}
    \caption{Plot of upper and lower bounds on the DI-CKA of Ref.~\cite{RMW18}. The yellow dashed line represents an upper bound (not fully optimized) on the upper bound $\frac{1}{N-1}I(N(A)\downarrow E)$ from Eq.~\eqref{eqn:UBmutual}  with the attack strategy in Eq.~\eqref{eqn:attack1}. The red dash-dotted curve is the trivial upper bound obtained in Corollary \ref{cor:trivial} via the relative entropy of entanglement bound ($1-\nu$). The blue solid line represents the lower bound from Ref.~\cite{RMW18}.}
    \label{fig:attack2}
\end{figure}

We then compare the obtained upper bounds with the lower bound on the DI-CKA provided in Ref.~\cite{RMW18}  (see Fig.~\ref{fig:attack2}). We obtain the plot by considering simplification of the reduced c-squashed entanglement. Namely, the extension to the adversarial system of the state attacking the honest parties device is classical, i.e., diagonal in the computational basis. For that reason, the bound which we use is in fact a {\it secrecy monotone}, called (multipartite) {\it intrinsic information}~\cite{Cerf2002}.

As the second main result, we show how to construct multipartite states with a strict gap between the rate of the quantum device-independent conference key rates $K_{DI}$ and quantum device-dependent conference key rates $K_{DD}$. As a proxy, we use a bipartite state that satisfies $K_{DD}>K^{\downarrow}$, where $K^{\downarrow}$ is the {\it reduced distillable key} introduced in Ref.~\cite{CFH21}. The reduced distillable key of $\rho_{N(A)}$ is the maximum value of the choice of measurements ${\cal M}$ of the distillable key of the adversarial state $\sigma_{N(A)}$ minimized over the 
choices of the state $\sigma_{N(A)}$ and measurements ${\cal N}$ so that the device $(\sigma_{N(A)},{\cal N})\equiv \left\{\Tr \left[ \sigma_{N(A)} (\otimes_{i=1}^N {\mathrm N}^{x_i}_{a_i} )\right] \right\}_{\bf{a}|\bf{x}}$, is equal to the honestly implemented device $(\rho, {\cal M})$. The mentioned gap between $K_{DI}$ and $K_{DD}$ means that also in a multipartite case for some states $\rho_{A_1,\ldots  ,A_N}$ (and any $N> 2$), there is neither a Bell-like inequality that can be used for testing nor a distillation protocol based on LOPC  that can achieve $K_{DI}(\rho_{A_1,\ldots , A_N})=K_{DD}(\rho_{A_1,\ldots , A_N})$. See Fig.~\ref{fig:multi}.

Finally, we discuss the issue of genuine nonlocality~\cite{Bell-nonlocality} and genuine entanglement in the context of the DI-CKA~\cite{Horodecki2009}. As the third main result, we provide a non-trivial bound on the device-independent key achievable in a parallel measurement scenario, when all
the parties set all values of the inputs $x_i$ in parallel. Furthermore, generalizing reduced bipartite entanglement measures in Ref.~\cite{KHD21} to multipartite entanglement measures, we show that the reduced regularized relative entropy of genuine entanglement~\cite{DBWH19} upper bounds the DI-CKA rate of multipartite quantum states. We further focus on the performance of protocols using a single input for key generation, as using such protocols is standard practice (see, e.g., \cite{AFFRV18}).

The remainder of this paper is organized as follows.
Section \ref{sec:di-qkd-state} is devoted to basic facts and provides bounds on the DI conference key via entanglement measures.  In Section~\ref{sec:multi_squashed}, we develop an upper bound on the {DI-CKA} via reduced c-squashed entanglement. In Section~\ref{sec:with_figure}, we provide particular examples for the performance of upper bounds considered in the paper. In Section~\ref{sec:gap}, we provide examples of multipartite states which
exhibit a fundamental gap between the device-dependent and -independent secure key rates. Section~\ref{sec:GNLandENT} discusses the connection between genuine nonlocality, entanglement, and DI-CKA. We conclude in Section~\ref{sec:discussion} with a summary and some directions for future study.

\begin{note}\label{note1}
Theorem 7 of Ref.~\cite{DSW18} states that the two multipartite entanglement measures, multipartite squashed entanglement $E_{sq}$ (Definition~4 in~\cite{HWD22}) and its dual $\widetilde{E}_{sq}$ (Definition~7 in~\cite{HWD22}), of a multipartite state are the same. As a consequence, the reduced c-squashed entanglement $E^c_{sq}(\rho,{\rm M})$ (Definition~5 in~\cite{HWD22}) and the dual c-squashed entanglement $\widetilde{E}^c_{sq}(\rho,{\rm M})$ (Definition~8 in~\cite{HWD22}) of a device $(\rho,{\rm M})$ are also the same. For this reason, Section~IV in the published version of the manuscript [Phys.~Rev.~A 105(2), 022604 (2022)]~\cite{HWD22} can be skipped in reading.
\end{note}

\begin{note}
    The current version differs from [Phys.~Rev.~A 105(2), 022604 (2022)]~\cite{HWD22} by removing Section IV, references to it, along with including updated Fig.~\ref{fig:attack2} and fixing typographic mistakes in notation and explanation of Eq.~\eqref{eqn:attack1}. There is no change with respect to the results and proofs of the [Phys.~Rev.~A 105(2), 022604 (2022)]~\cite{HWD22}.
\end{note}

\section{Bounds on device-independent key distillation rate of states}\label{sec:di-qkd-state}

In this section, we introduce the scenario of
device-independent conference key distillation from $n$ identical devices, each shared by $N$ honest users. 
We then introduce definitions and facts used in subsequent sections.

Consider a setup wherein $N$ multiple trusted spatially separated users (allies) have to extract a secret key, i.e., conference key, against the quantum adversary. Since we aim
at upper bounds on the device-independent conference key, we assume that the parties
share $n$ identical devices. The device
has its honest implementation, which 
is reflected by the state and measurement, denoted by $(\rho,{\cal M})$, 
that were intended to be delivered by a provider. The adversary may replace 
this honest implementation with a different
device $(\sigma,{\cal N})$, however, such that it yields the same {\it input-output statistics} as the honest one. Typically, the statistics tested by the allies are the level of violation of some Bell inequality, and the quantum bit error rate, i.e., the probability that the outputs of the honest parties are not equal to each other given
the raw key has been generated. In some cases, we will also consider the full statistics reflected by the pair $(\rho,{\cal M})$. We note here
that the state $\sigma$ can be finite or infinite-dimensional, as we do not restrict the strategies of the adversary in that respect.

The honest device is given by  $\mc{M}\equiv \{M^{x_1}_{a_1}\otimes M^{x_2}_{a_2}\otimes\ldots\otimes M^{x_N}_{a_N}\}_{\textbf{a}|\textbf{x}}$, where $ \textbf{x}\coloneqq (x_1,x_2,\ldots , x_N)$ and $\textbf{a}\coloneqq (a_1,a_2,\ldots ,a_N)$, for some $N\in\mathbb{N}$. For each $i\in [N]\coloneqq \{1,2,\ldots , N\}$, the set $\{a_i\}$ denotes the finite set of measurement outcomes for measurement choices $x_i$. The measurement outcomes, i.e., outputs of the device, are secure from the adversary and assumed to be in the possession of the receivers (allies). The joint probability distribution is given as 
\begin{equation}
   p({\bf a}|{\bf x})=\Tr[M^{x_1}_{a_1}\otimes M^{x_2}_{a_2}\otimes\ldots\otimes M^{x_N}_{a_N} \rho_{A_1A_2\ldots  A_N}] 
\end{equation}
for measurement $\mc{M}$ on $N$-partite state $\rho_{N(A)}$ defined on the separable Hilbert space $\mc{H}_{A_1}\otimes\mc{H}_{A_2}\otimes\ldots\mc{H}_{A_N}$; in what follows we will use $N(A)\equiv A_1\ldots  A_N$ for the ease of notation. The tuple $\left\{\rho,{\rm M}\right\}$, where ${\rm M}\coloneqq \left(\left\{M^{x_1}_{a_1}\right\}_{x_1},\left\{M^{x_2}_{a_2}\right\}_{x_2},\ldots , \left\{M^{x_N}_{a_N}\right\}_{x_N}\right)$, is called the quantum strategy of the distribution. The number of inputs $\{x_i\}$ and corresponding possible outputs $\{a_i\}$ of the local measurement at $A_i$ are arbitrarily finite in general. We denote the identity superoperator by $\id$ and the identity operator by $\mathbbm{1}$. 

Let $\omega(\rho, \mc{M})$ denote the violation of the given multipartite Bell-type inequality $\mc{B}$ by state $\rho$ when the measurement settings are given by $\mc{M}$. We note that by multipartite Bell-type inequality we mean any inequality derived using locally realistic hidden variable (LRHV) theories~see, e.g.,~\cite{Mer90,Ard92,BK93,SS02,ZB02,WW02,YCLO12,HSD15,Luo21}) such that any violation of a given inequality by a density operator implies the non-existence of an LRHV model for the device represented by this state and some measurements. There are families of Bell-type inequalities directly based on the joint probability distribution of local measurements that get violated by all pure multipartite (genuinely) entangled states~\cite{YCLO12,HSD15}. On the other hand, there are Bell-type inequalities based on correlation functions of local measurements for which some families of pure multipartite (genuinely) entangled states satisfy the inequalities~\cite{ZBLW02}.

Let $P_{err}(\rho,\mc{M})$ denote the expected quantum bit error rate (QBER). Both the Bell violation and the QBER are functions of the probability distribution of the behavior. In addition, $\Phi^{{\rm GHZ}}_{\vv{N}}\coloneqq \op{\Phi^{{\rm GHZ}}_{\vv{N}}}$ denotes the $N$-partite Greenberger--Horne--Zeilinger (GHZ) state.
\begin{equation}
    \ket{\Phi^{{\rm GHZ}}_{\vv{N}}}=\frac{1}{\sqrt{d}}\sum_{i=0}^{d-1}\ket{i}_{A_1}\otimes\ket{i}_{A_2}\otimes\cdots\otimes\ket{i}_{A_N}
\end{equation}
for $d=\min_{i}\dim(\mc{H}_{A_i})$. For $N=2$, $\Phi^{{\rm GHZ}}_{\vv{2}}$ is a maximally entangled (Bell) state $\Phi_{\vv{2}}$.

If $\{p({\bf a}\vert {\bf x})\}_{{\bf a}|{\bf x}}$ obtained from $(\rho,\mc{M})$ and another pair of states and measurements $(\sigma,\mc{N})$ are the same, we write $(\sigma,\mc{N})= (\rho,\mc{M})$. In most DI-CKA protocols, instead of using the statistics of the full correlation, we use the Bell violation and the QBER to test the level of security of the observed statistics. In this way, for practical reasons, the protocols coarse grain the statistics, and we only use partial information of the full statistics to extract the device-independent key. 
In this context, the notation $(\sigma,\mc{N})= (\rho,\mc{M})$ also implies that $\omega(\sigma,\mc{N})= \omega(\rho,\mc{M})$ and $P_{err}(\sigma,\mc{N})=P_{err}(\rho,\mc{M})$. When conditional probabilities associated with $(\rho,\mc{M})$ and $(\sigma,\mc{N})$ are $\varepsilon$-close to each other, then we write $(\rho,\mc{M})\approx_{\varepsilon}(\sigma,\mc{N})$. For our purpose, it suffices to consider the distance
\begin{equation}
    d(p,p')=\sup_{{\bf x}}\norm{p(\cdot\vert {\bf x})-p'(\cdot \vert {\bf x})}_1\leq \varepsilon.
\end{equation}

The device-independent distillable key rate of a device is informally defined as the supremum of the finite key rates $\kappa$ achievable by the best protocol on any device compatible with $(\rho,\mc{M})$, within an appropriate asymptotic blocklength limit and security parameter. Another approach taken is to minimize the key rate of the statistics compatible with the Bell parameter and a QBER (see, e.g.,~\cite{AL20}). For our purpose, we constrain ourselves to the situation when the compatible devices are supposedly independent and identically distributed. This constraint is because, as noted in Ref. \cite{CFH21}, the upper bound on the key in the independent and identically distributed scenario is automatically the upper bound on the device-independent conference key in the general scenario since the independent and identically distributed attack is just one of the possible attacks in the general device-independent scenario.

An ideal conference key state $\tau^{(K)}$, with $\log_2 K$ secret key bits for $N$ allies, is 
\begin{align}
\tau^{(K)}_{N(A)E}\coloneqq \frac{1}{K}\sum_{k=0}^{K-1}\op{k}_{A_1}\otimes \op{k}_{A_2}\otimes\cdots \otimes \op{k}_{A_N}\otimes \tau_E, \label{eq:key-state}
\end{align}
where $\tau_E$ is a state of the only system $E$ accessible to an adversary, i.e., the adversary is uncorrelated with trusted users and gets no information about their secret bits. Consider the relations
\begin{align}
    (\rho,\mc{M}) &\approx_\varepsilon  (\sigma,\mc{N}) \label{eq:k-1},\\
    \omega(\rho,\mc{M})& \approx_{\varepsilon} \omega(\sigma,\mc{N})\label{eq:k-2},\\
    P_{err}(\rho,\mc{M})& \approx_{\varepsilon} P_{err}(\sigma,\mc{M})\label{eq:k-3},
\end{align}
where Eq.~\eqref{eq:k-1} implies Eqs.~\eqref{eq:k-2} and ~\eqref{eq:k-3}.

Formally, the definition of device-independent quantum key distillation rate in the independent and identically distributed scenario is given as follows. 
\begin{definition}[cf.~\cite{CFH21}]\label{def:di-dev}
The maximum (multipartite) device-independent quantum key distillation rate of a device $(\rho,\mc{M})$ with independent and identically distributed behavior is defined as
\begin{equation}
    K^{iid}_{DI,dev}(\rho,\mc{M})\coloneqq \inf_{\varepsilon>0}\limsup_{n\to \infty} \sup_{\hat{\mc{P}}} \inf_{\eqref{eq:k-1}} \kappa^\varepsilon_{n} \left(\hat{\mc{P}}\left((\sigma,\mc{N})^{\otimes n}\right)\right),
\end{equation}
where $\kappa_n^\varepsilon$ is the rate of a key distillation protocol $\hat{\mc{P}}$ producing $\varepsilon$-secure output, acting on $n$ copies of the state $\sigma$, measured with $\mc{N}$. Here $\hat{\mc{P}}$ is a protocol composed of classical local operations and public (classical) communication (CLOPC) acting on $n$ identical copies of $(\sigma,\mc{N})$ which, composed with the measurement, results in a quantum local operations and public (classical) communication (QLOPC) protocol.
\end{definition}

The following lemma follows from the definition of $K^{iid}_{DI}$ (generalizing statements from bipartite DI quantum key distillation in Refs.~\cite{CFH21,KHD21} to the DI-CKA).
\begin{lemma}\label{lem:di-state}
The maximum (multipartite) device-independent quantum key distillation rate $K^{iid}_{DI}$ of a device $(\rho,\mc{M})$ is equal to the maximum (multipartite) device-independent quantum key distillation rate of a device $(\sigma,\mc{N})$ when $(\rho,\mc{M})= (\sigma,\mc{N})$:
\begin{align}
& (\rho,\mc{M})=(\sigma,\mc{N}) \implies K^{iid}_{DI,dev}(\rho,\mc{M})= K^{iid}_{DI,dev}(\sigma,\mc{N}).\label{eq:p-d-i}
\end{align}
\end{lemma}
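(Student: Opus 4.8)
The plan is to prove the implication by unfolding Definition~\ref{def:di-dev} for both devices and observing that the defining device enters the expression only through the compatibility constraint Eq.~\eqref{eq:k-1}, which depends on the device exclusively through the behavior it generates. Since the hypothesis $(\rho,\mc{M})=(\sigma,\mc{N})$ means precisely that the two devices produce identical behaviors, the sets over which the inner infimum ranges coincide, and equality of the two rates follows immediately.

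Concretely, I would first rewrite the inner infimum $\inf_{\eqref{eq:k-1}}$ as an infimum over an explicit set of compatible devices. For a device $(\rho,\mc{M})$ with behavior $p_{\rho,\mc{M}}(\cdot|\cdot)$, let $S_\varepsilon(\rho,\mc{M})$ denote the set of all strategies $(\tau,\mc{O})$ with $(\rho,\mc{M})\approx_\varepsilon(\tau,\mc{O})$, where I use fresh dummy symbols $(\tau,\mc{O})$ to avoid clashing with the $(\sigma,\mc{N})$ of the lemma statement. By the definition of $\approx_\varepsilon$ through the distance $d(p,p')=\sup_{\bf x}\norm{p(\cdot|{\bf x})-p'(\cdot|{\bf x})}_1$, membership in $S_\varepsilon(\rho,\mc{M})$ is determined purely by $p_{\rho,\mc{M}}$ and not by the particular state or measurement realizing it. With this notation the definition reads $K^{iid}_{DI,dev}(\rho,\mc{M})=\inf_{\varepsilon>0}\limsup_{n}\sup_{\hat{\mc{P}}}\inf_{(\tau,\mc{O})\in S_\varepsilon(\rho,\mc{M})}\kappa^\varepsilon_n(\hat{\mc{P}}(\tau,\mc{O})^{\otimes n})$, and likewise for $(\sigma,\mc{N})$.

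The key step is then the observation that $(\rho,\mc{M})=(\sigma,\mc{N})$ yields $p_{\rho,\mc{M}}=p_{\sigma,\mc{N}}$, whence $d(p_{\rho,\mc{M}},p_{\tau,\mc{O}})=d(p_{\sigma,\mc{N}},p_{\tau,\mc{O}})$ for every candidate $(\tau,\mc{O})$, so that $S_\varepsilon(\rho,\mc{M})=S_\varepsilon(\sigma,\mc{N})$ for each $\varepsilon>0$. Because the objective $\kappa^\varepsilon_n(\hat{\mc{P}}(\tau,\mc{O})^{\otimes n})$ depends only on the running variable $(\tau,\mc{O})$ together with $\varepsilon$, $n$, and $\hat{\mc{P}}$, and not on the defining device, the two inner infima coincide for every fixed $\varepsilon$, $n$, and protocol. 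Applying $\sup_{\hat{\mc{P}}}$, then $\limsup_n$, then $\inf_{\varepsilon>0}$ to identical quantities delivers $K^{iid}_{DI,dev}(\rho,\mc{M})=K^{iid}_{DI,dev}(\sigma,\mc{N})$.

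I do not expect a genuinely hard step: the statement is essentially a consistency check that the rate is well defined as a function of the behavior alone. The one point requiring care is confirming that compatibility is a property of behaviors, and I would make this explicit using the remarks preceding the lemma. Since $(\rho,\mc{M})=(\sigma,\mc{N})$ forces $\omega(\rho,\mc{M})=\omega(\sigma,\mc{N})$ and $P_{err}(\rho,\mc{M})=P_{err}(\sigma,\mc{N})$, and since Eq.~\eqref{eq:k-1} implies Eqs.~\eqref{eq:k-2}--\eqref{eq:k-3}, the compatibility sets coincide regardless of whether one reads the constraint as matching the full statistics or only the tested pair of Bell violation and QBER; the argument is therefore robust to this choice.
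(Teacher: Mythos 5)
Your proof is correct and is exactly the argument the paper has in mind: the paper states this lemma without a written proof, asserting it ``follows from the definition,'' and your unfolding of Definition~\ref{def:di-dev} --- noting that the compatibility set $S_\varepsilon$ and hence the inner infimum depend on the device only through its behavior --- is precisely that argument made explicit. No gaps; the final remark on the $\omega$/QBER constraints is a harmless bonus that covers the $K^{iid}_{DI,par}$ variant as well.
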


The maximal DI-CKA rate $K_{DI,dev}(\rho, \mathcal{M})$ for the device $(\rho,\mc{M})$ is upper bounded by the maximal device-dependent conference key agreement (DD-CKA) rate $K_{DD}(\sigma)$ for all $(\sigma,\mc{N})$ such that $(\sigma,\mc{N})= (\rho,\mc{M})$ (cf.~\cite{CFH21}), i.e.,
\begin{equation}
    K_{DI,dev}(\rho, \mathcal{M})\leq \inf_{(\sigma,\mc{N})= (\rho,\mc{M})} K_{DD}(\sigma)\label{eq:di-bound10}.
\end{equation}

The device-dependent quantum key distillation rate $K_{DD}(\rho)$ (cf.~\cite{CEHHOR}) is the maximum secret key (against quantum eavesdropper) that can be distilled between allies using local operations and classical communication (LOCC) (see, e.g.,~\cite{DBWH19}),  
\begin{align}\label{eq:kdd}
    &K_{DD}(\rho) \coloneqq \nonumber \\
    &\inf_{\epsilon > 0} \lim_{n\to \infty} \sup_{\Lambda \in QLOPC} \left\{ \left.\frac{\log_2 d_n}{n} ~\right|~ \Lambda ((\psi^{\rho})^{\otimes n}) \approx_\epsilon \tau^{(\log d_n)} \right\} ,
\end{align}
where $\psi^{\rho}$ is a purification of $\rho$ and 
 $\rho \approx_\epsilon \sigma \Longleftrightarrow \frac{1}{2}\norm{\rho-\sigma}_1 \le \epsilon$.

\begin{corollary}~\label{obs:ent-mes}
For entanglement measures $Ent$ which upper bound the maximum device-dependent key distillation rate, i.e., $K_{DD}(\rho)\leq Ent (\rho)$ for a density operator $\rho$, we have
\begin{align}\label{eq:di-e-1}
     K_{DI,dev}(\rho, \mathcal{M}) &\leq \inf_{(\sigma,\mc{N})= (\rho,\mc{M})} K_{DD}(\sigma) \\
&     \leq  \inf_{(\sigma,\mc{N})= (\rho,\mc{M})} Ent (\sigma).
\label{eq:FCHbound}
\end{align}
\end{corollary}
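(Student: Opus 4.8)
The plan is to chain two inequalities, the first of which is already in hand as Eq.~\eqref{eq:di-bound10}. First I would simply invoke that bound, namely $K_{DI,dev}(\rho,\mc{M})\leq \inf_{(\sigma,\mc{N})=(\rho,\mc{M})}K_{DD}(\sigma)$, which reproduces line~\eqref{eq:di-e-1} verbatim; no additional argument is needed at that step, since establishing Eq.~\eqref{eq:di-bound10} is precisely where the device-independent-to-device-dependent reduction was carried out.

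For the second line~\eqref{eq:FCHbound} I would invoke the standing hypothesis that $Ent$ dominates the device-dependent key rate pointwise, i.e.\ $K_{DD}(\tau)\leq Ent(\tau)$ for every density operator $\tau$. Specializing to $\tau=\sigma$ for each admissible $\sigma$ in the feasible set $\{\sigma : (\sigma,\mc{N})=(\rho,\mc{M})\}$ gives $K_{DD}(\sigma)\leq Ent(\sigma)$ for every such $\sigma$. I would then take the infimum of both sides over this same feasible set: because the inequality holds term by term across an identical index set, monotonicity of the infimum yields $\inf_{(\sigma,\mc{N})=(\rho,\mc{M})}K_{DD}(\sigma)\leq \inf_{(\sigma,\mc{N})=(\rho,\mc{M})}Ent(\sigma)$, which is exactly the passage from~\eqref{eq:di-e-1} to~\eqref{eq:FCHbound}. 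Combining this with the first step closes the chain.

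Since the corollary is a direct consequence of an already-proved bound together with an assumed pointwise domination, there is no genuine obstacle. The only point deserving a moment's care is that both infima range over exactly the same constraint $(\sigma,\mc{N})=(\rho,\mc{M})$, so that the term-wise comparison transfers to the infima without enlarging or shrinking the feasible region; were the index sets to differ, the inequality could fail to propagate. All the substantive content resides in Eq.~\eqref{eq:di-bound10} and in the hypothesis $K_{DD}\leq Ent$, both of which are supplied, so the argument is purely a matter of composing them.
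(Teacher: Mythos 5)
Your proposal is correct and matches the paper's (implicit) argument exactly: the corollary is stated as an immediate consequence of Eq.~\eqref{eq:di-bound10} combined with the hypothesis $K_{DD}(\sigma)\leq Ent(\sigma)$ applied term-wise over the identical feasible set $\{\sigma:(\sigma,\mc{N})=(\rho,\mc{M})\}$. Your added remark about both infima ranging over the same constraint set is the only point of care, and it is handled correctly.
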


\begin{remark}
We do not make any assumption about the dimension of the Hilbert space on which the state $\sigma$ (Definitions~1--3) is defined as the systems can be finite-dimensional or infinite dimensional. We only assume that the systems $A_i$ accessible by the allies for key distillation upon measurement are finite dimensional in an honest setting. The systems $E$ accessible to an adversary can be finite dimensional or infinite dimensional, depending on the cheating strategy (see Lemma~\ref{lem:cehhor}, Appendix \ref{sec:app:B}). 
\end{remark}

As discussed above, a large class of device-independent quantum key distillation protocols relies on the Bell violation and the QBER of the device $p({\bf a}|{\bf x})$. For such protocols, we can define the device-independent key distillation protocol as follows.

\begin{definition}[cf.~\cite{AL20}]\label{def:di-par}
The maximal (multipartite) device-independent quantum key distillation rate of a device $(\rho,\mc{M})$ with independent and identically distributed behavior, Bell violation $\omega(\rho,\mc{M})$, and QBER $P_{err}(\rho,\mc{M})$ is defined as
\begin{align}
    & K^{iid}_{DI,par}(\rho,\mc{M})\nonumber \\
    &\quad\coloneqq \inf_{\varepsilon>0}\limsup_{n\to \infty} \sup_{\hat{\mc{P}}} \inf_{\eqref{eq:k-2},\eqref{eq:k-3}} \kappa^\varepsilon_{n}\left(\hat{\mc{P}}\left((\sigma,\mc{N})^{\otimes n}\right)\right).
\end{align}
\end{definition}
\begin{remark}
As Eq.~\eqref{eq:k-1} implies Eqs.~\eqref{eq:k-2} and \eqref{eq:k-3}, it follows from the definitions of $K^{iid}_{DI}(\rho,\mc{M})$ and $K^{iid}_{DI,par}(\rho,\mc{M})$ that $K^{iid}_{DI,par}(\rho,\mc{M})\leq K^{iid}_{DI,dev}(\rho,\mc{M})$.
\end{remark}

\begin{remark}[cf.~\cite{Bell-nonlocality}]
We note that there may exist states $\rho$ for which $K^{iid}_{DI}(\rho)=0$ but $K^{iid}_{DI}(\rho^{\otimes k})> 0$ for some $k\in\mathbbm{N}$.
\end{remark}

\section{Reduced c-squashed entanglement bound}
\label{sec:multi_squashed}
In this section we generalize the notion of the  { cc-squashed entanglement}~\cite{AL20,KHD21} to the multipartite form. Next we prove that the properly scaled reduced c-squashed entanglement serves as an upper bound on the device-dependent conference key of the classical-quantum state. Furthermore, via Lemma \ref{lem:convex} and Proposition \ref{prop:convex} we prove that the reduced c-squashed entanglement is convex. This result may be further applied to generate numeric upper bounds with the {\it convexification technique}~\cite{WDH22,KHD21}. Then we prove the main result of this section. Namely, we prove that the {\it independent and identically distributed quantum device-independent conference key} is upper bounded by the reduced c-squashed entanglement. Finally, we show that similar results hold when the honest parties broadcast the inputs to their devices so that the adversary can learn them.

In what follows, we first prove that the ``measured" version of the multipartite squashed entanglement $E_{sq}^{q}$ defined in Ref.~\cite{Yang2009}, if properly scaled,
upper bounds the conference key secure against the quantum
adversary. Let us first recall facts and definitions.
The {\it multipartite conditional mutual information} of a state $\rho_{A_1,\ldots,  A_NE}$ reads~\cite{Wat60}
\begin{align}\label{eq:i_multipartite}
    I(A_1:\ldots  :A_N|E)_{\rho}=\sum_{i=1}^{N} S(A_i|E)_{\rho} - S(A_1,\ldots  ,A_N|E)_{\rho}.
\end{align}
Here, the conditional entropy $S(A_i|E)_{\rho}=S(A_iE)_{\rho}-S(E)_{\rho}$, with $S(AB)\coloneqq -\Tr[\rho_{AB}\log_2  \rho_{AB}]$ and $S(A)\coloneqq -\Tr[\rho_A\log_2 \rho_A]$ being von Neumann entropies. The von Neumann entropy reduces to the Shannon entropy $H(X)$ for classical register $X$, $H(X)= -\sum_{x}p(x)\log_2 p(x)$, where $\{p(x)\}_{x}$ is the probability distribution associated with the random variable $X$. It will be crucial to note that the following identity 
holds \cite{Yang2009}
\begin{align}
    I(A_1:\ldots  :A_N|E)_{\rho} &=I(A_1:A_2|E)_{\rho} + I(A_3:A_1A_2|E)_{\rho}+\nonumber\\
    I(A_4:A_1A_2A_3|E)_{\rho} &+\cdots  +I(A_N:A_1
    \ldots  A_{N-1}|E)_{\rho}.
    \label{eq:expansion}
\end{align}
Here $I(A:B|C)_{\rho}=S(AC)_{\rho}+S(BC)_{\rho}-S(C)_{\rho}-S(ABC)_{\rho}$ is the {\it conditional mutual information}.
\begin{remark}\label{rem:permutations} Let $\Sigma$ be any permutation of indices $1, \ldots , N$. Then
\begin{align}
    &I(A_1:\ldots  :A_N|E)_{\rho}=I(A_{\Sigma(1)}:\ldots  :A_{\Sigma(N)}|E)_{\rho}\nonumber\\
    &=
    I(A_{\Sigma(1)}:A_{\Sigma(2)}|E)_{\rho} + I(A_{\Sigma(3)}:A_{\Sigma(1)}A_{\Sigma(2)}|E)_{\rho}\nonumber\\
    &+I(A_{\Sigma(4)}:A_{\Sigma(1)}A_{\Sigma(2)}A_{\Sigma(3)}|E)_{\rho}\nonumber\\ &+\ldots  +I(A_{\Sigma(N)}:A_{\Sigma(1)}
    \ldots  A_{\Sigma(N-1)}|E)_{\rho}.
\end{align}
\end{remark}

Further, the multipartite squashed  entanglement of 
a quantum state $\rho_{A_1,\ldots,  A_N}$ is defined as follows (Definition 3 of Ref.~\cite{Yang2009}).
\begin{definition} [\cite{Yang2009}]\label{def:CC_primal_ext}
For an $N$-partite state $\rho_{A_1,\ldots  ,A_N}$,
\begin{equation}
    E^{q}_{sq}(\rho_{A_1,\ldots,  A_N}):=\inf_{\sigma} I(A_1:A_2:\ldots  :A_N|E)_{\sigma},\label{eqn:def:CC_primal_ext}
\end{equation}
where the infimum is taken over states $\sigma_{A_1,\ldots,  A_NE}$ that are extensions of $\rho_{A_1,\ldots,  A_N}$, i.e., $\Tr_E[\sigma_{A_1,\ldots , A_NE}] = \rho_{A_1,\ldots , A_N}$. 
\end{definition}
We will need to generalize the notion of the {\it cc-squashed entanglement}~\cite{AL20,KHD21} to the multipartite form.
\begin{definition}\label{def:CC_primal_channel}
A reduced c-squashed entanglement of a state
$\rho_{A_1,\ldots , A_N}$ is defined as
\begin{align}
    &E^{c}_{sq}(\rho_{A_1,\ldots,  A_N},{\mathrm M}) \nonumber \\
    &:= \inf_{\Lambda:~{E\rightarrow E'}}I(A_1:\ldots  :A_N|E')_{{\mathrm M}_{N(A)}\otimes\Lambda \psi^{\rho}_{N(A) E}},\label{eqn:def:CC_primal_channel}
\end{align}
where ${\mathrm M}_{N(A)}$ is an $N$-tuple of positive-operator-valued measures (POVMs) ${\mathrm M}_{A_1},\ldots , {\mathrm M}_{A_N}$ and state $\ket{\psi^{\rho}_{N(A)E}}$ is a purification of $\rho_{N(A)}$.
\end{definition}
The first theorem comes with the following fact, which
is a multipartite generalization of Theorem~$5$ from Ref.~\cite{KHD21}, where $N=2$. Namely, the device-dependent conference key of the classical-quantum state is upper bounded by the properly scaled reduced c-squashed entanglement. We are ready to state a theorem that shows that the c-squashed entanglement, when properly scaled, upper bounds the device-dependent key.

\begin{theorem}
For an $N$-partite state $\rho_{N(A)}$, its purification $\psi^\rho_{N(A)E}$, and an $N$-tuple of POVMs ${\mathrm M}_{N(A)}$, there is
\begin{equation}
K_{DD}({\mathrm M}_{N(A)}\otimes \id_E\psi^\rho_{N(A)E}) \leq \frac{1} {N-1}E^{c}_{sq}(\rho_{N(A)},{\mathrm M_{N(A)}}).
\end{equation}
\label{thm:bound_on_ddkey}
\end{theorem}
\begin{proof}
We closely follow the proof of Theorem~$3.5$ of Ref.~\cite{CEHHOR}, however, based not on Theorem~$3.1$ of Ref.~\cite{CEHHOR}, but its generalization to
the case where system $E$ need not be finite
(see Lemma \ref{lem:infinite_dim}, Appendix \ref{sec:app:B}). 
Namely, any function which satisfies  
(i) monotonicity under LOPC, (ii) asymptotic continuity, (iii) normalization, and (iv) subadditivity is, after regularization, an
upper bound on the distillable key secure against the quantum adversary ($K_{DD}$). 

We first show the monotonicity. The LOPC consist of
local operation and 
public communication.
A local operation consists
of adding a local ancilla,
performing a unitary transformation, and a partial trace. It is easy to see
that adding a local ancilla
at one system does not 
alter this quantity. The same holds for the unitary transformation. The partial trace does not increase it as it can be rewritten in terms of the conditional mutual information terms as in Eq.~(\ref{eq:expansion}). Then the same argument as in the proof of Theorem 3.5 of Ref.~\cite{CEHHOR} [see Eq.~(57) therein] applies.

Finally, for classical communication, we use the form given in Eq.~\eqref{eq:expansion} to verify the inequality stated below for the case when $A_i$ produces locally the variable $C_i$ and then broadcasts it to all the parties in the form of $C_j$ for $j\neq i$ and to the adversary in the form of $C_{N+1}$ 
(note that broadcasting followed by a partial trace, if needed, can simulate any classical communication among $N$ parties):
\begin{align}
    &I(A_1:\ldots  :A_iC_i:\ldots  :A_N|E)_{\rho}\label{eq.(i)}\\
    &\stackrel{(I)}{=}I(A_iC_i:A_2:\ldots  :A_{i-1}:A_1:A_{i+1}:\ldots  :A_N|E)_{\rho}\\
    &=I(A_iC_i:A_2|E)_{\rho} + I(A_3:A_iC_iA_2|E)_{\rho}+\nonumber\\
    &I(A_4:A_iC_iA_2A_3|E)_{\rho} +\ldots  +I(A_N:A_iC_i
    \ldots  A_{N-1}|E)_{\rho}\label{eq:(ii)}\\
    &\stackrel{(II)}{\ge} I(A_iC_i:A_2C_2|EC_{N+1})_{\rho} \nonumber\\
    &+ I(A_3C_3:A_iC_iA_2C_2|EC_{N+1})_{\rho}\nonumber\\
    &+I(A_4C_4:A_iC_iA_2C_2A_3C_3|EC_{N+1})_{\rho}\nonumber\\
    &+\ldots  +I(A_NC_N:A_iC_i
    \ldots  A_{N-1}C_{N-1}|EC_{N+1})_{\rho}\\
    &=I(A_1C_1:\ldots  :A_iC_i:\ldots  :A_NC_N|EC_{N+1})_{\rho}.
\end{align}
In Eq.~\eqref{eq.(i)} [step $(I)$], we transposed the labels $i$ and $1$ (see Remark \ref{rem:permutations}), and step $(II)$ [Inequality ~\eqref{eq:(ii)}] follows (termwise) from the monotonicity of the (tripartite) mutual information function proved in  Ref.~\cite{CEHHOR} (see the proof of Theorem~$3.5$ therein).

Regarding asymptotic continuity, we consider two
states $\rho_{N(A)}$ and $\sigma_{N(A)}$ such that
$\norm{\rho_{N(A)} - \sigma_{N(A)}}_1\leq \epsilon$. Then,
as in the proof of Theorem $3.5$ of Ref.~\cite{CEHHOR}, for
any map $\Lambda:E\rightarrow E'$ there is
$\norm{\rho_{N(A)}' - \sigma_{N(A)}'}_1\leq \epsilon$,
where $\rho_{N(A)}':= \id_{N(A)} \otimes \Lambda\rho_{N(A)}$
and $\sigma_{N(A)}':= \id_{N(A)} \otimes \Lambda(\sigma_{N(A)})$.
Then, by the expansion Eq.~\eqref{eq:expansion} we obtain
\begin{align}
    &|I(A_i:A_{1}\ldots  A_{i-1}|E')_\rho - I(A_i:A_{1}\ldots  A_{i-1}|E')_\sigma | \nonumber \\&\leq 2\epsilon\log_2 d_{A_i}+ 2g(\epsilon),
\end{align}
with $d_{A_i} := \dim (\mathcal{H}_{A_i})$ and $g(\epsilon):=(1+\epsilon)\log_2(1+\epsilon) - \epsilon \log_2\epsilon$, where we use Lemma~\ref{lem:Shirikov} from Appendix \ref{sec:app:A}, provided in Ref.~\cite{Shi17}.
Hence, in total we get 
\begin{align}
    &|I(A_1:\ldots  :A_N|E')_\rho - I(A_1:\ldots  :A_N|E')_\sigma| \nonumber\\ &\leq
    2\epsilon \sum_{i=1}^{N-1}\log d_{A_i} + (N-1)2g(\epsilon) \nonumber \\ &\leq
    (N-1)[2\epsilon \max_{i\in\{1,\ldots  ,N\}}\log_2 d_{A_i} + 2g(\epsilon)].
\end{align}
For a finite natural $N$, the right-hand side of the above approaches $0$, with $\epsilon \rightarrow 0$.

The subadditivity follows again from the fact
that we can split the term $I(A_1:\ldots  :A_N|E)$ into 
$N-1$ terms of the form $I(A_i:A_1\ldots  A_{i-1}|E)$. Further
treating $A_1\ldots  A_{i-1}$ together as $B_i$ (equivalent of $B$ in the proof of Theorem 3.5 in Ref.~\cite{CEHHOR}), we can
prove the additivity of the form
\begin{equation}
    I(A_iA_i':B_iB_i'|EE')= I(A_i:B_i|E) +I(A_i':B_i'|E')
\end{equation}
for each term and notice the subadditivity from the
fact that in the infimum in the definition of $E_{sq}^{c}(\rho,M)$ there are product channels; hence
in general the formula can be lower than the above.

Finally, we consider normalization. It is straightforward
to see that, assuming $d_{A_i}= d_A$ for each $i \in \{1,\ldots  ,N\}$, on the state representing the ideal key
$\tau_{N(A)E}= \frac{1}{d_{A}}\sum_{i_1=0}^{d_{A}-1} \op{ii\ldots  i}\otimes \tau_{E}$~[Eq. \eqref{eq:key-state}] there is 
$E_{sq}^{c}(\tau,{\mathrm M}) = (N-1)\log_2 d_{A}$, by
noticing that on the product state $I(A_i:A_1\ldots  A_{i-1}|E)=I(A_i:A_1\ldots  A_{i-1}) = \log_2 d_{A}$,
and there are $(N-1)$ of such terms in the definition of $E_{sq}^{c}$. We assume here also that the measurement
${\mathrm M}$ is generating the key in the computational basis.
\end{proof}

We have further an analog of Observation
$4$ of Ref.~\cite{KHD21}. Its proof 
goes along similar lines. Indeed, it does not
depend on either the type of objective function that is minimized or the number of parties; hence we omit it here.
\begin{observation}
For an $N$-partite state $\rho_{N(A)}$
and a POVM ${\mathrm M}_{N(A)}={\mathrm M}_{A_1},\ldots , .{\mathrm M}_{A_N}$ there is
\begin{align}
    &E^{c}_{sq}(\rho,{\mathrm M})\nonumber \\
    &= \inf_{\rho_{N(A)E} = Ext(\rho_{N(A)})} I(A_1:\ldots  :A_N|E)_{{\mathrm M_{N(A)}}\otimes \id_E\rho_{N(A)E}},
\end{align}
where $Ext(\rho_{N(A)})$ stands for the state extension of $\rho_{N(A)}$, i.e., $\rho_{N(A)E}$ is a density operator such that $\Tr[\rho_{N(A)E}]=\rho_{N(A)}$.
\label{obs:extensions}
\end{observation}

Owing to Observation \ref{obs:extensions}, we can obtain
the analog of Lemma $6$ of Ref.~\cite{KHD21}, which
states that $E_{sq}^{c}$ is convex. 
\begin{lemma}
For a tuple of POMVs ${\mathrm M}_{N(A)}$, two
states $\rho_{N(A)}^{(1)}$ and $\rho_{N(A)}^{(2)}$, and
$0<p<1$, there is
\begin{equation}
    E^{c}_{sq}(\bar{\rho}_{N(A)})\leq p E^{c}_{sq}(\rho_{N(A)}^{(1)}) + (1-p)E^{c}_{sq}(\rho_{N(A)}^{(2)}),
\end{equation}
where $\bar{\rho}_{N(A)}=p\rho_{N(A)}^{(1)}+(1-p)\rho_{N(A)}^{(2)}$.
\label{lem:convex}
\end{lemma}
\begin{proof}
The proof is due to the fact that the function
$E^{c}_{sq}(\bar{\rho}_{N(A)})$ is upper bounded
by $I(A_1:\ldots  :A_N|EF)$ evaluated on a state
$\rho_{N(A)EF}={\mathrm M}_{N(A)}\otimes \id_{EF}(p\rho^{(1)}\otimes\op{0}_F+(1-p)\rho^{(2)}\otimes|1\>\<1|_F$. Further, by Eq.~\eqref{eq:expansion} there
is 
\begin{align}
    &I(A_1:\ldots  :A_N|EF)_{\rho} \nonumber\\
    &=pI(A_1:\ldots  :A_N|E)_{{\mathrm M}_{N(A)}\otimes \id_{EF} \rho^{(1)}_{N(A)E}} \nonumber\\
    &+(1-p)I(A_1:\ldots  :A_N|E)_{{\mathrm M}_{N(A)}\otimes \id_{EF} \rho^{(2)}_{N(A)E}}.
\end{align}
Since the states $\rho^{(1)}$ and $\rho^{(2)}$ were arbitrary, we get the thesis.
\end{proof}
We further note that switching from a bipartite key distillation task to the conference key distillation
does not alter the formulation or the proof
of Lemma~$7$ of Ref.~\cite{KHD21}. We state it 
below for the sake of the completeness of the further
proofs.
\begin{lemma}
The independent and identically distributed quantum device-independent key achieved by protocols using (for generating the key) a single tuple of measurements $(\hat{x}_1,\ldots  ,\hat{x}_N)\equiv \hat{\bf{x}}$ applied to ${\cal M}$ of a device $(\rho_{N(A)},{\cal M})$ is upper bounded as
\begin{align}
    &K_{DI,dev}^{iid,\hat{\bf{x}}}(\rho_{N(A)},{\cal M}):=\inf_{\epsilon>0}\limsup_{n \to \infty}\sup_{{\cal P}\in LOPC} \nonumber\\
    &\inf_{{(\sigma_{N(A)},{\cal L})\approx_{\epsilon}(\rho_{N(A)},{\cal M})}} \kappa^{\epsilon,\hat{\bf{x}}}_n(\hat{{\cal P}}({\mathrm L}(\sigma_{N(A)})^{\otimes n}) \nonumber\\
    &\leq  \inf_{{(\sigma_{N(A)},{\cal L})=(\rho_{N(A)},{\cal M})}} K_{DD}({\cal L}(\hat{\bf{x}})\otimes \id_E \psi^\sigma_{N(A)E}),
\end{align}
where ${\mathrm L}\equiv{\cal L}(\hat{\bf{x}})$ is a single pair of measurements induced by inputs $\hat{\bf{x}}$ on $\cal L$ and   $\kappa_n^{\epsilon,\hat{\bf{x}}}$ is the rate of the  $\varepsilon$-perfect conference key achieved and classical labels from local classical operations in $\hat{{\cal P}}\in CLOPC$ are possessed by the allies holding systems $A_i$ for $i \in \{1,\ldots  ,N\}$.
\label{lem:ubound_by_key}
\end{lemma}
Combining Theorem~\ref{thm:bound_on_ddkey} with
Lemma~\ref{lem:ubound_by_key}, we obtain the main 
result of this section. This is a bound
by the reduced reduced c-squashed entanglement. 
\begin{theorem}
The independent and identically distributed quantum device-independent conference key achieved by protocols using a single tuple of measurements $(\hat{x}_1,\ldots  ,\hat{x}_N)\equiv \hat{\bf{x}}$ applied to $ {\cal M}$ of a device $(\rho_{N(A)},{\cal M})$ is upper bounded as  
\begin{align}
    &K_{DI,dev}^{iid,\hat{\bf{x}}}(\rho_{N(A)},{\cal M})\nonumber \\&\leq \frac{1}{N-1}\inf_{{(\sigma_{N(A)},{\cal L})\equiv(\rho_{N(A)},{\cal M})}}E_{sq}^{c}(\sigma_{N(A)},{\cal L}(\hat{\bf{x}}))\\
    &=:E_{sq,dev}^{c}(\rho_{N(A)},{\cal M}(\hat{\bf{x}})).
\end{align}
\label{thm:reduced_dev}
\end{theorem}
We have an analogous result for a key
which is a function only of the tested parameters, that of Bell inequality
violation and the quantum bit error rate.

\begin{theorem}
The independent and identically distributed quantum device-independent key achieved by protocols using (for generating the key) a single tuple of measurements $\hat{\bf x}$ applied to ${\cal M}$ of a device $(\rho_{N(A)},{\cal M})$ is upper bounded as    
\begin{align}
    &K_{DI,par}^{iid,\hat{\bf{x}}}(\rho_{N(A)},{\cal M}) 
    \nonumber\\  
    &\coloneqq \inf_{\epsilon>0}\limsup_{n \to \infty}\sup_{{\cal P}\in LOPC}\nonumber \\&\inf_{\underset{P_{err}(\sigma_{N(A)},{\cal L})\approx_\epsilon P_{err}(\rho_{N(A)},{\cal M})}{\omega(\sigma_{N(A)},{\cal L})\approx_\epsilon\omega(\rho_{N(A)},{\cal M})}} \kappa^{\epsilon,{\hat{\bf x}}}_n({\cal P}({\mathrm L}(\sigma_{N(A)})^{\otimes n}))\label{eq:di_par1}\\
    &\leq  \frac{1}{N-1}\inf_{\underset{P_{err}(\sigma_{N(A)},{\cal L})= P_{err}(\rho_{N(A)},{\cal M})}{\omega(\sigma_{N(A)},{\cal L})=\omega(\rho_{N(A)},{\cal M})}}E_{sq}^{c}(\sigma_{N(A)},{\mathrm L}) \nonumber\\&=: E_{sq,par}^{c}(\rho_{N(A)},{\cal M}(\hat{{\bf x}})),
\end{align}
where $L={\cal L}(\hat{{\bf x}})$ is a single tuple of measurements induced by inputs $\hat{{\bf x}}$ on $\cal L$.
\label{thm:esqbound}
\end{theorem}
For $N=2$, the above bound recovers the result of Ref.~\cite{KHD21}.

In the definition of $E_{sq,par(dev)}^{c}$, one can take the infimum only
over the classical extensions to Eve~\cite{Yang2009}. In that case, for a single input $\hat{\bf x}$ this bound reads $\frac{1}{N-1}I(N(A)\downarrow E)$, as given in Ref.~\cite{Yang2009} (see Ref.~\cite{GW00,CEHHOR,FBL+21} for the bipartite case). We have the following immediate corollary.
\begin{corollary}
The independent and identically distributed quantum device-independent key achieved by protocols using a tuple of measurements $\hat{\bf x}$ applied to a device $(\rho_{N(A)},{\cal M})$ is upper bounded as
\begin{align}
    &K_{DI,dev}^{iid,\hat{{\bf x}}}(\rho_{N(A)},{\cal M}) \nonumber \\&\leq \inf_{(\sigma_{N(A)},{\cal L})=(\rho_{N(A)},{\cal M})} \frac{1}{N-1}I(N(A)\downarrow E)_{P(A_1:\ldots  :A_N|E)} \\
    &\equiv \inf_{(\sigma_{N(A)},{\cal L})=(\rho_{N(A)},{\cal M})} \inf_{\Lambda:~{E\rightarrow F}} \nonumber \\
    &\frac{1}{N-1}I(N(A)| F)_{P(A_1:\ldots  :A_N|\Lambda(E))}, 
    \label{eqn:UBmutual}
\end{align}
where $P(A_1:\ldots  :A_N|E)$ is a distribution coming from measurement ${\cal L}({\hat{\bf x}})$ on purification of $\sigma_{N(A)}$ to system $E$, and the infimum is taken over classical channels transforming a  random variable $E$ to a random variable $F$.
\label{cor:intrinsic}
\end{corollary}
We will exemplify Corollary \ref{cor:intrinsic} for $N=3$ parties and the scenario considered in Ref. \cite{RMW18}. For the results, see Fig.~\ref{fig:attack2}. Let us also note that when one restricts the infimum in Eq.~(\ref{eqn:UBmutual}), the channel $\Lambda :E\rightarrow F$ has only a classical output and the above bound is a multipartite generalization of the {\it intrinsic information} bound given in Ref.~\cite{FBL+21}.
An analogous corollary holds for the case of $K_{DI,par}^{iid,{\hat{\bf x}}}$.

We finally note, that $E_{sq,par}^{c}$ is convex
also, in the multipartite case. This may prove important when one finds upper bounds, as any convexification of two plots obtained from optimization of $E_{sq,par}^{c}$ is then an upper bound on $K_{DI,par}^{c}$, as it was used in Ref.~\cite{KHD21}. We state it below 
following Lemma 8 of Ref.~\cite{KHD21}.
\begin{proposition}
The $E_{sq,par}^{c}$ is convex, i.e.,
for every device $(\bar{\rho},{\cal M})$
and an input tuple ${\bf \hat{x}}$
there is
\begin{align}
    &E_{sq,par}^{c}(\bar{\rho},{\cal M}({\bf \hat{x}})) \leq \nonumber\\ 
    &p_1E_{sq,par}^{c}(\rho_1,{\cal M}({\bf \hat{x}})) +
    p_2 E_{sq,par}^{c}(\rho_2,{\cal M}({\bf \hat{x}})),
\end{align}
where $\bar{\rho}=p_1\rho_1+ p_2 \rho_2$ and $p_1+p_2=1$ with $0\leq p_1 \leq 1$.
\label{prop:convex}
\end{proposition}
\begin{proof}
The proof goes the same way as that for the
bipartite case of Lemma~8 in Ref.~\cite{KHD21}, with the only change that we base it on the convexity of its multipartite version $E_{sq}^{c}$, i.e., Lemma~\ref{lem:convex} here,
and the fact that
\begin{align}
    &I(A_1A_1':\ldots  :A_NA_N'|E)[\rho_{A_1\ldots  A_N}\otimes|i\ldots  i\>\<i\ldots  i|_{A_1'\ldots  A_N'}]\nonumber \\ 
    &= I(A_1:\ldots  :A_N|E)[\rho_{A_1\ldots  A_N}],
\end{align}
where $i \in\{0,1\}$, $\rho_{A_1,\ldots , A_N}$ is arbitrary state of systems $A_1\ldots  A_N$, and we define $I(A_1:\ldots  :A_N|E)[\rho]\equiv I(A_1:\ldots :A_N|E)_\rho$. This is because a pure product state alters neither the entropy of marginals nor the global entropy of the state.
\end{proof}

We note that the multipartite function $E^{c}_{sq}$ can be defined for multiple measurements as in Ref.~\cite{KHD21} and the analogous results (e.g., Corollary 6 of Ref.~\cite{KHD21}) to the bipartite case would hold for the multipartite case.
\begin{definition}
The reduced c-squashed entanglement of the collection of measurements $\mc{M}$ with probability distribution $p({\bf x})$ of the input reads
\begin{equation}
    E^{c}_{\sq}(\rho_{N(A)},\mc{M},p({\bf x}))\coloneqq \sum_{{\bf x}}p({\bf x})E^{c}_{sq}(\rho_{N(A)},\mathrm{M}_{{\bf x}}).
\label{eq:more_measurements}
\end{equation}
\end{definition}
Usually, the parties broadcast their inputs used to generate the key during the protocol.
One can therefore consider a version of the distillable device-independent key achieved by such protocols which do this broadcasting.
We then consider the quantum device-independent key rate
\begin{align}
    & K^{iid, broad}_{DI, dev}(\rho_{N(A)},\mc{M},p({\bf x}))\coloneqq \nonumber\\
    & \inf_{\varepsilon>0}\limsup_{n \to \infty}\sup_{\hat{\mc{P}}\in LOPC}\inf_{(\sigma,\mc{N})\approx_{\varepsilon}(\rho,\mc{M})}  \nonumber\\
    & \kappa^{\varepsilon}_{n}(\hat{\mc{P}}([\sum_{{\bf x}}p({\bf x})\mathrm{N}_{{\bf x}}\otimes\id_E(\op{\psi_{\sigma}}\otimes\op{{\bf x}}_{E_{{\bf x}}})]^{\otimes n})),
\end{align}
where by $broad$ we mean that ${\bf x}\coloneqq (x_1,\ldots , x_N)$ are broadcasted and we make it explicit by adding classical registers $E_{{\bf x}}\coloneqq E_{x_1},\ldots , E_{x_N}$ held by Eve. We have then a generalization of
Theorem of \ref{thm:reduced_dev} to the case of more measurements that are revealed during the protocol of key distillation.

\begin{proposition}
The independent and identically distributed quantum device-independent key achieved by protocols using measurements of a device $(\rho_{N(A)},\mc{M})$ with probability $p({\bf x})$ is upper bounded as 
\begin{align}
  &  K^{iid,broad}_{DI,dev}(\rho_{N(A)},\mc{M},p({\bf x}))\equiv \nonumber\\
& \inf_{\varepsilon>0}\limsup_{n \to \infty}\sup_{\hat{\mc{P}}\in LOPC}\inf_{(\sigma_{N(A)},\mc{N})\approx_{\varepsilon}(\rho_{N(A)},\mc{M})}  \nonumber\\
& \kappa^{\varepsilon}_{n}(\hat{\mc{P}}([\sum_{{\bf x}}p({\bf x})\mathrm{N}_{{\bf x}}\otimes\id_E(\op{\psi_{\sigma}}\otimes\op{{\bf x}}_{E_{{\bf x}}})]^{\otimes n}))\\
&\leq\frac{1}{N-1} \inf_{(\sigma,\mc{N})=(\rho,\mc{M})}E^{c}_{sq}(\sigma_{N(A)},\mc{N},p({\bf x}))\\
& =: E^{c}_{sq, dev}(\rho_{N(A)},\mc{M},p({\bf x})),
\end{align}
where $\mathrm{N}_{{\bf x}}$ are measurements induced by ${\bf x}$ on $\mc{N}$.
\end{proposition}
\begin{proof}
The proof follows straightforwardly from
generalization of Lemma~10 and Theorem~10 from Ref.~\cite{KHD21} for the case of $E_{sq}^{c}$, taking 
as the argument the measurements as in Eq.~\eqref{eq:more_measurements}, 
composed with a broadcast map which for the 
choice of inputs ${\bf x}$ creates systems $E_{\bf x}$
in state $\op{{\bf x}}$.
\end{proof}

\section{Bound on the rate of a Parity CHSH based protocol by the reduced c-squashed entanglement}
\label{sec:with_figure}
In this section we consider the scenario 
of $N=3$ parties and compare the known lower
bound on the conference key rate \cite{RMW18}
with the upper bounds introduced in previous sections. 

Below we exemplify the use 
of the bound by the reduced {c-squashed entanglement} $E_{sq}^{c}$ in the case with classical Eve, that is, when the infimum in its definition runs over the extensions of the form $\sum_i p_i\rho_{A(N)}^i\otimes\op{i}_E$ (or equivalently the channels acting on system $E$ have only classical outputs). We restrict ourselves to the standard protocols with a single pair of inputs generating the key~\cite{FBL+21}. We exemplify the bound given in Corollary \ref{cor:intrinsic} by means of $I(N(A)\downarrow E)$. It then is in essence a matter of  checking the value of the multipartite intrinsic information measure of a distribution which is the output of a key-generating measurement on the attacking state (as it is done in the bipartite case in Ref.~\cite{FBL+21}).

To compare the introduced upper bounds with the known lower bound, for the honest implementation, we focus on the GHZ state, on which depolarizing noise acts locally on three qubits~\cite{RMW18}. 
Having this state, and playing a tripartite game on it \cite{Bell-nonlocality}, called the parity  Clauser-Horne-ShimonyHolt (CHSH) game, one can obtain (in the low-noise regime) a secure conference key.
More precisely, we have the following.
\begin{definition}[parity CHSH game \cite{RMW18}]
The parity CHSH inequality extends the CHSH inequality to N parties as follows. Let Alice and Bob${}_1$, \ldots  , Bob${}_{N-1}$ be the $N$ players of the following game (the parity CHSH game). Alice
and Bob${}_1$ are asked uniformly random binary questions
$x\in \{0,1\}$ and $y\in \{0,1\}$, respectively. The other Bobs are
each asked a fixed question, e.g., always equal to~$1$. Alice
will answer bit $a$, and for all ${i \in \{1,\ldots  ,N-1\}}$, Bob${}_i$ answers bit $b_i$. We denote by $\bar{b} \coloneqq \bigotimes_{2\le i \le N-1} b_i$ the parity of all the answers of Bob${}_{2}$, \ldots  , Bob${}_{N-1}$. The players win if and only if
\begin{align}
    a+b_1=x(y+\bar{b})\mod 2.
\end{align}
\end{definition}
As for the CHSH inequality, the winning probability $P_\mathrm{win}^\mathrm{Parity-CHSH} $ for the classical strategies of the parity CHSH game must satisfy
\begin{align}
    P_\mathrm{win}^\mathrm{Parity-CHSH} \le \frac{3}{4}.
\end{align}
The above inequality can be violated with the  $\Phi^{\GHZ}_{\vv{3}}$ state, with the maximal (quantum) value of $\frac{1}{2}+\frac{1}{2\sqrt{2}}$.

We adopt the same model of noise as in Ref.~\cite{RMW18}, which is represented by qubit depolarizing channels acting the same way on each qubit of the GHZ state:
\begin{align}
    \mathcal{D}_\nu(\rho)= (1-\nu) \rho + \nu \frac{\mathbbm{1}}{2}.
\end{align}
Below we explain the result of applying this global channel to the GHZ state $|\Phi^{\GHZ}_{\vv{N}}\>\<\Phi^{\GHZ}_{\vv{N}}|$ in the case of $N=3$.

\begin{observation} The GHZ state after the action of depolarizing noise on each qubit reads
\begin{align}
    &\mathcal{D}_\nu\otimes \mathbbm{1}_{B_1B_2} (|\Phi^{\GHZ}_{\vv{3}}\>\<\Phi^{\GHZ}_{\vv{3}}|_{AB_1B_2}) \nonumber\\
    &= (1-\nu) |\Phi^{\GHZ}_{\vv{3}}\>\<\Phi^{\GHZ}_{\vv{3}}|_{AB_1B_2} + \nu \frac{\mathbbm{1}_A}{2} \otimes \kappa_{B_1B_2},
\end{align}
 where the $\kappa_{B_1B_2}=\frac{1}{2}\left(|00\>\<00|_{B_1B_2}+|11\>\<11|_{B_1B_2}\right)$ state is separable.
\end{observation}
\begin{remark}
The fully separable state originating from a depolarizing channel (single party), i.e., $\frac{\mathbbm{1}_A}{2} \otimes\kappa_{B_1B_2}$, can not violate the parity CHSH inequality.
\end{remark}
After applications of the depolarizing channel to each of three qubits we obtain the following.
\begin{corollary}
We have
\begin{align}
    &\mathcal{D}_\nu^{\otimes 3} (|\Phi^{\GHZ}_{\vv{3}}\>\<\Phi^{\GHZ}_{\vv{3}}|_{AB_1B_2}) \nonumber\\
    &= (1-\nu)^3 |\Phi^{\GHZ}_{\vv{3}}\>\<\Phi^{\GHZ}_{\vv{3}}|_{AB_1B_2} + [1-(1-\nu)^3] \chi_\nu,
    \label{eq:dep_GHZ_state}
\end{align}
where $\chi_\nu$ is a fully separable state
which reads
\begin{align}
    \chi_\nu:& = \frac{1}{1-(1-\nu)^3
    }\nonumber\\& \times\left((1-\nu)^2\nu\kappa_{AB_1}\otimes \frac{{\mathbbm{1}}_{B_2}}{2} +
    (1-\nu)^2\nu\kappa_{AB_2}\otimes \frac{{\mathbbm{1}}_{B_1}}{2}\right.\nonumber\\
    &\left.+(1-\nu)^2\nu\kappa_{B_1B_2}\otimes \frac{{\mathbbm{1}}_{A}}{2}
        +(3-2\nu)\nu^2\frac{\mathbbm{1}_{AB_1B_2}}{2}\right).
\end{align}
\end{corollary}

In Ref.~\cite{RMW18}, the expected winning probability for the parity CHSH game (with respect to the depolarizing noise parameter) is calculated:
\begin{align}
    p_\mathrm{exp} := \left[\frac 12 + \frac{(1-\nu)^N}{2 \sqrt{2}} +\frac{(1-\nu)^2(1-(1-\nu)^{N-2})}{8\sqrt{2}}\right].
\end{align}
From the above equality for $N=3$, the state in Eq.~(\ref{eq:dep_GHZ_state}) violates the classical bound of $\frac 34$ for $0\leq\nu < \nu_\mathrm{crit}$, where $\nu_\mathrm{crit}\approx 0.1189$.

In this place, we start the construction of the eavesdropper strategy. According to the DI-CKA protocol in Ref.~\cite{RMW18}, the ranges of inputs and outputs are $x \in \{0,1\}$, $y_1 \in \{0,1,2\}$, $y_2 \in \{0,1\}$, and $a,b_1, b_2 \in \{0,1\}$. The setting $(x,y_1,y_2)=(0,2,0)$ associated with measurements of $\sigma_z$ observable is the key-generating round:
\begin{align}
    &P_\nu (a,b_1,b_2|x,y_1,y_2)\nonumber \\
    &=\Tr \left[M_{a|x}\otimes M_{b_1|y_1} \otimes M_{b_2|y_2} \mathcal{D}_\nu^{\otimes 3} (|\Phi^{\GHZ}_{\vv{3}}\>\<\Phi^{\GHZ}_{\vv{3}}|_{AB_1B_2}) \right]\\
    &=(1-\nu)^3\Tr \left[M_{a|x}\otimes M_{b_1|y_1} \otimes M_{b_2|y_2} |\Phi^{\GHZ}_{\vv{3}}\>\<\Phi^{\GHZ}_{\vv{3}}|_{AB_1B_2} \right] \nonumber\\
    &+ (1-(1-\nu)^3) \Tr \left[M_{a|x}\otimes M_{b_1|y_1} \otimes M_{b_2|y_2} \chi_\nu \right]\\
    &=(1-\nu)^3 P_\mathrm{GHZ} (a,b_1,b_2|x,y_1,y_2) \nonumber\\
    &+ (1-(1-\nu)^3) P_\nu^\mathrm{L}(a,b_1,b_2|x,y_1,y_2).
\end{align}
Here the behavior $P_\mathrm{GHZ}$ arises from measurements of the GHZ state (which allows us to violate the classical bound maximally). The local behavior $P_\nu^\mathrm{L}$ arises from the same measurements ($\sigma_z$ observable) for biseparable state and therefore can be expressed as a convex combination of deterministic behaviors.

Eve prepares a convex combination attack~\cite{AGM+06,AMP+06}
\begin{align}
    &P_\nu^\mathrm{CC} (a,b_1,b_2,e|x,y_1,y_2) \nonumber\\
    &= (1-\nu)^3 P_\mathrm{GHZ} (a,b_1,b_2|x,y_1,y_2) \delta_{e,?} \nonumber\\
    &+ [1-(1-\nu)^3] P_\nu^\mathrm{L}(a,b_1,b_2|x,y_1,y_2) \delta_{e,(a,b_1,b_2)}.
\end{align}
This attack might not be optimal as it uses a particular decomposition of $P_\nu$. In order to optimize the attack, Eve should find a decomposition with a maximal weight of local behavior [$1-(1-\nu)^3$ here].



We now consider a particular strategy of post-processing the data which is in Eve's possession, represented by a channel $E\to F$ in Corollary~\ref{cor:intrinsic}. Following Ref.~\cite{FBL+21}, we consider only the distribution coming from a key-generating measurement, which according to the protocol of Ref.~\cite{RMW18} is $X=0$ for Alice and $B_1=2$ and $B_2=0$ for the Bobs in the case of $N = 3$,

\begin{align}
    &P_\nu^\mathrm{ATTACK} (a,b_1,b_2,f|020)=\Lambda_{E\to F} ~P_\nu^\mathrm{CC} (a,b_1,b_2,e|020) \nonumber\\
    &= (1-\nu)^3 P_\mathrm{GHZ} (a,b_1,b_2|020) \delta_{f,?} \nonumber\\
    &+ (1-(1-\nu)^3) P_\nu^\mathrm{L}(a,b_1,b_2|020) \nonumber\\
    &\times \left[\delta_{a,b_1,b_2}\delta_{f,a}+(1-\delta_{a,b_1,b_2})\delta_{f,?} \right],
    \label{eqn:attack1}
\end{align}
where $\delta_{a,b_1,b_2}$ is $1$ if all three indices have the same value and $0$ otherwise. The above attack strategy is therefore a direct three-partite generalization of strategy proposed in Ref.~\cite{FBL+21}. The eavesdropper aims to be correlated only with the events $(a,b_1,b_2)=(0,0,0)$ or $(a,b_1,b_2)=(1,1,1)$, whenever they originate from the local behavior $P_\nu^\mathrm{L}$, and maps all other events to $f=?$. By applying the above attack strategy, we are ready to plot an upper bound on the reduced  c-squashed entanglement shown in Corollary~\ref{cor:intrinsic}.
The latter bound is a multipartite version of the intrinsic information \cite{MauWol97c-intr,Intrinsic-Maurer}, used first for the bipartite case in \cite{AMP+06} against non-signaling adversary (see in this context \cite{KW17,PKBW21,WDH22}). Here the strategy of Eve to process her classical variable $E$ to $F$ is based on \cite{FBL+21} as shown above.

\section{Gap between DI-CKA and DD-CKA}
\label{sec:gap}
In this section, we provide a bound on the conference key agreement of $N$ parties in terms of the bounds for groupings of these parties into groups of fewer than $N$ users.
We further show that 
there is a gap between the
device-independent and device-dependent conference key agreement rates. This gap implies that there are states for which there are no measurements used for testing and no CLOPC protocol that can achieve the same number of keys as in the device-dependent case. The gap is inherited from the analogous gap shown for the bipartite case~\cite{CFH21}.

In what follows, by a (nontrivial) partition {\cal P} of 
the set of systems $\{A_1,\ldots  ,A_N\}$,
we mean any grouping of the systems 
into at least two but no more than $N-1$ subsets such that each $A_i$ belongs to exactly one subset and each of them belongs to some subset. 

Let us now generalize the definition of the reduced device-dependent key to the case of the conference key agreement. We will further also show the fact that the latter quantity bounds the device-independent conference key (i.e.,~Theorem $6$ of Ref.~\cite{CFH21}).
\begin{definition}
The reduced device-dependent conference key rate of an $N$-partite state $\rho_{N(A)}$ reads
\begin{align}
K^{\downarrow}(\rho_{N(A)})\coloneqq \sup_{\cal M}\inf_{(\sigma_{N(A)},{\cal L})=(\rho_{N(A)},{\cal M})} K_{DD}(\sigma_{N(A))}).
\end{align}
\end{definition}
A direct analog of Theorem $6$ of Ref.~\cite{CFH21} (with an analogous proof which we omit here) states that the {\it reduced } device-dependent key upper bounds the device independent key.
\begin{theorem}
For any $N$-partite state $\rho_{N(A)}$ and any ${\cal M}$, there is
\begin{equation}
    K_{DI}(\rho_{N(A)},{\cal M}) \leq \inf_{(\sigma_{N(A)},{\cal L})=(\rho_{N(A)},{\cal M})}K_{DD}(\sigma_{N(A)})
\end{equation}
and in particular,
\begin{equation}
    K_{DI}(\rho_{N(A)})\equiv \sup_{\cal M} K_{DI}(\rho_{N(A)},{\cal M}) \leq K^{\downarrow}(\rho_{N(A)}).
\end{equation}
\label{thm:downarrow}
\end{theorem}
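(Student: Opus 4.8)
The plan is to prove the per-measurement bound first and then obtain the state bound by a supremum over $\mc{M}$. For the first inequality I would argue directly from Definition~\ref{def:di-dev}. Fix an arbitrary compatible pair $(\sigma_{N(A)},\mc{L})=(\rho_{N(A)},\mc{M})$. The outer supremum defining $K_{DI}(\rho_{N(A)},\mc{M})$ ranges over cLOPC protocols $\hat{\mc{P}}$ acting on the classical outcomes, while the inner infimum ranges over all devices reproducing the observed behavior; hence for every $\hat{\mc{P}}$ the inner infimum is bounded above by its value on this particular $(\sigma_{N(A)},\mc{L})$. The decisive observation is that composing the fixed measurement $\mc{L}$ with the classical post-processing $\hat{\mc{P}}$ yields a genuine quantum LOPC channel $\Lambda$ acting on $\sigma_{N(A)}$, so at each fixed $n$ and $\varepsilon$ the extracted rate equals $\kappa^\varepsilon_n(\Lambda(\sigma_{N(A)}^{\otimes n}))$ for some $\Lambda\in qLOPC$.

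Consequently, $\sup_{\hat{\mc{P}}}\inf_{(\sigma,\mc{L})=(\rho,\mc{M})}\kappa^\varepsilon_n(\hat{\mc{P}}(\sigma,\mc{L})^{\otimes n})\leq \sup_{\Lambda\in qLOPC}\kappa^\varepsilon_n(\Lambda(\sigma_{N(A)}^{\otimes n}))$, since every protocol contributes a rate realizable by some $\Lambda\in qLOPC$ on $\sigma_{N(A)}$. Carrying out $\inf_{\varepsilon}\limsup_{n}$ on both sides, the right-hand side is exactly the quantity defining $K_{DD}(\sigma_{N(A)})$ in Eq.~\eqref{eq:kdd}. Because the compatible pair was arbitrary, I then pass to the infimum, obtaining $K_{DI}(\rho_{N(A)},\mc{M})\leq \inf_{(\sigma_{N(A)},\mc{L})=(\rho_{N(A)},\mc{M})}K_{DD}(\sigma_{N(A)})$, which is the claimed bound and the multipartite specialization of Eq.~\eqref{eq:di-bound10}.

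The second inequality then follows by applying $\sup_{\mc{M}}$ to both sides. On the left, $\sup_{\mc{M}}K_{DI}(\rho_{N(A)},\mc{M})$ is by definition $K_{DI}(\rho_{N(A)})$. On the right, $\sup_{\mc{M}}\inf_{(\sigma_{N(A)},\mc{L})=(\rho_{N(A)},\mc{M})}K_{DD}(\sigma_{N(A)})$ is precisely the defining expression for the reduced device-dependent conference key rate $K^{\downarrow}(\rho_{N(A)})$, yielding $K_{DI}(\rho_{N(A)})\leq K^{\downarrow}(\rho_{N(A)})$.

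The step I expect to be the main obstacle is making the interchange of the asymptotic limits with the infimum over compatible devices fully rigorous. Concretely, the estimate ``rate at $(\sigma,\mc{L})\leq K_{DD}(\sigma)$'' must be controlled at the level of the finite-$n$, finite-$\varepsilon$ rates $\kappa^\varepsilon_n$ before $\limsup_n$ and $\inf_\varepsilon$ are taken, and one must verify that the reduction (measurement followed by cLOPC) $\subseteq qLOPC$ remains valid when $\sigma_{N(A)}$ and its adversarial extension are infinite-dimensional, which is the generic situation for a device-independent attack as emphasized in Section~\ref{sec:di-qkd-state}. Since $K_{DD}$ is defined through $qLOPC$ over arbitrary separable Hilbert spaces, this causes no difficulty, but it is the point where care is required; the remainder is bookkeeping with the definitions, in exact analogy with Theorem~$6$ of Ref.~\cite{CFH21}.
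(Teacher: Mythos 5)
Your proof is correct and is essentially the argument the paper intends: the paper omits the proof of Theorem~\ref{thm:downarrow} as a ``direct analog of Theorem 6 of Ref.~\cite{CFH21}'', and your reduction---a fixed exactly compatible pair $(\sigma_{N(A)},\mc{L})$ is admissible in the inner infimum, the measurement $\mc{L}$ composed with cLOPC post-processing is a qLOPC protocol on $\sigma_{N(A)}$, so the finite-$(n,\varepsilon)$ rates are dominated by those defining $K_{DD}(\sigma_{N(A)})$ before taking $\limsup_n$ and $\inf_\varepsilon$, followed by the infimum over compatible pairs and the supremum over $\mc{M}$---is precisely that argument (cf.\ Eq.~\eqref{eq:di-bound10}). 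You also correctly read the quantity under the infimum in the first inequality as $K_{DD}(\sigma_{N(A)})$ rather than $K_{DD}(\rho_{N(A)})$, which is the intended meaning of the theorem as stated.
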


We first observe the following bound.
\begin{proposition} For any $N$-partite quantum 
behavior $(\rho_{N(A)},{\cal M})$ there is
\begin{align}
&K_{DI,dev}^{iid}(\rho_{N(A)},{\cal M})\leq \min\left\{ \min_{\cal P} K_{DI,dev}^{iid}(\rho_{{\cal P}(N(A))}),\right.\nonumber\\& \left.\min_{{\cal P}} \inf_{(\sigma_{{\cal P}(N(A))},{\cal L})=(\rho_{{\cal P}(M(A))},{\cal M})}K_{DD}(\sigma_{{\cal P}(N(A))})\right\},
\label{eq:trivial_bound}
\end{align}
where ${\cal P}$ is any non-trivial partition of the set of systems $A_1,\ldots , A_N$.
\label{prop:cuts}
\end{proposition}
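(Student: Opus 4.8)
The plan is to reduce the whole statement to a single monotonicity principle --- coarse-graining the $N$ parties into the blocks of a partition $\mc{P}$ can only increase the distillable device-independent conference key --- and then to compose that principle with the reduced device-dependent bound already recorded in Eq.~\eqref{eq:di-bound10} (in its conference form, Theorem~\ref{thm:downarrow}). The two members of the minimum in \eqref{eq:trivial_bound} correspond to stopping after the monotonicity step, respectively to composing it with the reduced-key bound; the outer minima are then just the optimization over all admissible $\mc{P}$.

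First I would fix a non-trivial partition $\mc{P}$ and regard each block as a single honest \emph{laboratory} that physically holds the separate boxes of its constituent parties and may post-process their classical outputs jointly. I would then establish $K^{iid}_{DI,dev}(\rho_{N(A)},\mc{M})\le K^{iid}_{DI,dev}(\rho_{\mc{P}(N(A))})$ from two observations. First, the ideal conference-key state of Eq.~\eqref{eq:key-state}, read out block by block, is again an ideal conference-key state for the $|\mc{P}|$ laboratories, so any protocol that $\varepsilon$-approximates $\tau^{(d_K)}_{N(A)}$ at rate $\kappa$ also $\varepsilon$-approximates the grouped ideal state at the same rate (relabeling parties into laboratories is trace-distance non-increasing and leaves the number of key bits fixed). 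Second, every $N$-party cLOPC post-processing $\hat{\mc{P}}$ in Definition~\ref{def:di-dev} is a special case of a $\mc{P}$-grouped cLOPC protocol, since intra-block communication becomes a free local operation. The decisive point is that this regrouping enlarges the honest protocol class while leaving the adversary's feasible set untouched: the boxes are still separate, so the compatible devices $(\sigma,\mc{N})$ in the infimum of Definition~\ref{def:di-dev} are exactly those whose measurements factor across all $N$ original sites and reproduce the behavior. Hence, for each fixed $\varepsilon$ and blocklength $n$, the grouped quantity is a supremum over a larger protocol class and an infimum over the same device class; this yields the inequality, and taking the supremum over the laboratories' measurements as in Definition~\ref{def:di-state} followed by the minimum over $\mc{P}$ gives the first member of the minimum.

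For the second member I would apply the reduced device-dependent bound \eqref{eq:di-bound10} directly to the grouped behavior $(\rho_{\mc{P}(N(A))},\mc{M})$, obtaining
\begin{equation}
K_{DI,dev}(\rho_{\mc{P}(N(A))},\mc{M})\le \inf_{(\sigma_{\mc{P}(N(A))},\mc{L})=(\rho_{\mc{P}(N(A))},\mc{M})} K_{DD}(\sigma_{\mc{P}(N(A))}).
\end{equation}
Chaining this with the monotonicity inequality of the previous paragraph and minimizing over all non-trivial $\mc{P}$ produces the second member of the minimum, and together the two bounds give \eqref{eq:trivial_bound}. Everything else is bookkeeping: the $\inf_\varepsilon\limsup_n$ and the $K_{DD}$ regularization pass through the grouping unchanged, as neither the honest output systems nor the ideal target is altered by relabeling parties into laboratories.

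The step that needs genuine care --- and the only place the statement could fail if set up carelessly --- is the monotonicity claim, specifically the assertion that grouping enlarges the honest operational power while keeping the adversary's set of compatible devices fixed. This relies on reading $\rho_{\mc{P}(N(A))}$ in the correct operational model (one box per original party, joint classical post-processing per block); if instead each block were treated as a single box admitting joint quantum measurements, the compatible set would grow, the infimum over devices could only decrease, and the desired inequality might reverse. I would therefore make explicit that the infimum in Definition~\ref{def:di-dev} ranges over literally the same family before and after grouping, and that the max--min inequality is invoked in the same direction as in the proof of \eqref{eq:di-bound10}, so that the enlargement of $\sup_{\hat{\mc{P}}}$ alone drives the bound.
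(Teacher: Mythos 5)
Your architecture is the same as the paper's: a coarse-graining monotonicity step (the paper justifies it by $LOPC(A_1,\ldots,A_N)\subsetneq LOPC(\mc{P}(A_1,\ldots,A_N))$) followed by Theorem~\ref{thm:downarrow} applied at the block level. The trouble is that your two halves rest on incompatible readings of the grouped quantity $K_{DI,dev}^{iid}(\rho_{\mc{P}(N(A))},\mc{M})$. For the monotonicity step you insist (your ``decisive point'') that after grouping the infimum still runs over devices $(\sigma,\mc{N})$ whose measurements factor across all $N$ original sites. But with that reading, applying Eq.~\eqref{eq:di-bound10}/Theorem~\ref{thm:downarrow} to the grouped behavior can only yield an infimum over that same $N$-site-factoring family: the max--min argument behind Eq.~\eqref{eq:di-bound10} never introduces devices outside the feasible set of the DI quantity it is applied to. The second member of Eq.~\eqref{eq:trivial_bound}, however, is an infimum over genuinely $|\mc{P}|$-partite devices $(\sigma_{\mc{P}(N(A))},\mc{L})$, whose measurements may act jointly inside a block; that is a strictly larger feasible set, hence a smaller infimum, so what you obtain is a strictly weaker bound than the one claimed. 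This is not a technicality: the block-joint form is exactly what Theorem~\ref{thm:gap} needs, where the bound is chained to the bipartite $K^{\downarrow}$ of Ref.~\cite{CFH21}, whose infimum runs over all bipartite-compatible devices of the marginal behavior.

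Conversely, if you adopt the reading that does produce the claimed second member (block-joint compatible devices), your monotonicity step collapses for precisely the reason you flag in your last paragraph: grouping then enlarges both the supremum (protocols) and the infimum (devices), and set inclusion alone no longer gives $K_{DI,dev}^{iid}(\rho_{N(A)},\mc{M})\le K_{DI,dev}^{iid}(\rho_{\mc{P}(N(A))},\mc{M})$. Indeed, a single box holding a whole block can reproduce any intra-block correlations (even Bell-violating ones) from a classical extension fully known to Eve, so the infimum over the enlarged device class can drop for a fixed protocol. Thus under either consistent reading one half of your argument fails: you cannot have the device class ``untouched'' for monotonicity and simultaneously ``block-joint'' for the reduced-key bound. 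For what it is worth, the paper's own one-line proof does not resolve this tension either---it silently uses the block-joint reading throughout and justifies monotonicity by protocol inclusion alone---but your write-up, by making the device sets explicit, exposes the inconsistency rather than repairing it; repairing it would require a genuine argument that enlarging the adversary's device class from $N$-site-factoring to block-factoring cannot increase the distillable DI key certified by the behavior, which neither you nor the paper supplies.
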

\begin{proof}
The proof of the bound by $K_{DI,dev}^{iid}(\rho_{{\cal P}(N(A))})$ follows from the fact that
any protocol of distillation of the
DI conference key from the $N$-partite state is a special case
of a protocol that distills the 
DI conference key from a non-trivial partition ${\cal P}$. This is because the class of LOPC protocols in these two scenarios is
in relation to $LOPC(A_1,\ldots  ,A_N)\subsetneq LOPC({\cal P}(A_1,\ldots  ,A_N))$.
The other bound follows from the fact that 
for any grouping ${\cal P}$, by
Theorem \ref{thm:downarrow} above,

\begin{align}
    &K_{DI,dev}^{iid}(\rho_{{\cal P}(N(A))},{\cal M}) \nonumber\\&\leq   \inf_{(\sigma_{{\cal P}(N(A))},{\cal L})=(\rho_{{\cal P}(M(A))},{\cal M})}K_{DD}(\sigma_{{\cal P}(N(A))}).
\end{align}
\end{proof}

An analogous fact to the above holds for
$K_{DI,par}^{iid}$ as well.

Following Ref.~\cite{CFH21}, we show now that there is a gap between the numbers of conference keys
and device-independent conference keys. We will use 
the fact that there it has been proven that there are states with $K^{\downarrow}(\rho_{AB})< K_{DD}(\rho_{AB})$. From such state $\rho_{AB}$ we construct a multipartite state with the property that $K_{DI}(\rho_{N(A)}) < K_{DD}(\rho_{N(A)})$, as it is described in the proof of the following theorem.

\begin{theorem}
Let 
$\rho_{AB}\in\mc{B}(\mc{H}_A\otimes\mc{B})$, where $\dim(\mc{H}_A)=d_A$ and $\dim(\mc{H}_B)=d_B$, be a bipartite state which 
admits a gap $K_{DD}(\rho_{AB}) - K^{\downarrow}(\rho_{AB})\geq c >0$ for some constant $c$. Then for any $N$ there
is a multipartite state
$\rho_{N(A)}$ with
local dimensions at most $d_A\times d_B$ with
$K_{DD}(\rho_{N(A)})-K_{DI}(\rho_{N(A)}) \geq c$.
\label{thm:gap}
\end{theorem}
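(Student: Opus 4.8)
The plan is to lift the bipartite gap to $N$ parties by a chain construction and then sandwich the two conference-key rates between the bipartite quantities $K_{DD}(\rho_{AB})$ and $K^{\downarrow}(\rho_{AB})$. I would fix the state as follows. Set $d\coloneqq \min(d_A,d_B)$ and let $\Phi^{(i)}$ denote a maximally entangled state of local dimension $d$. Take
\begin{equation}
\rho_{N(A)}\coloneqq \rho_{AB}\otimes \Phi^{(2)}\otimes\Phi^{(3)}\otimes\cdots\otimes\Phi^{(N-1)},
\end{equation}
arranged as a chain: party $A_1$ holds the $A$-part of $\rho_{AB}$, party $A_2$ holds the $B$-part together with one half of $\Phi^{(2)}$, each intermediate party $A_i$ ($3\le i\le N-1$) holds the two singlet halves of $\Phi^{(i-1)}$ and $\Phi^{(i)}$, and $A_N$ holds the remaining half of $\Phi^{(N-1)}$. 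For $N=3$ this is exactly the state $\rho_{AB_1}\otimes\rho_{B_2 C}$ of Fig.~\ref{fig:multi}. Since $K_{DD}(\rho_{AB})\le \log_2 d$, each link satisfies $K_{DD}(\Phi^{(i)})=\log_2 d\ge K_{DD}(\rho_{AB})$, and because $d=\min(d_A,d_B)$ one checks that every party carries local dimension at most $d_A\times d_B$ (e.g.\ $d\cdot d\le d_A d_B$ for the interior parties).

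Next I would prove the lower bound $K_{DD}(\rho_{N(A)})\ge K_{DD}(\rho_{AB})$ by an explicit relay (LOPC) protocol. The parties distill a device-dependent key on every edge of the chain, obtaining a key of rate $K_{DD}(\rho_{AB})$ on the edge $(A_1,A_2)$ and rate $\log_2 d$ on each singlet edge. Starting from $A_1$, who generates a string $k$ of rate $K_{DD}(\rho_{AB})$, each party forwards $k$ to its neighbour by one-time-padding it with the fresh edge key and broadcasting the result; since every edge key is at least as long as $k$, all $N$ parties end up sharing $k$ while the public transcript leaks nothing. Hence the chain yields a conference key at the minimum of the edge rates, which equals $K_{DD}(\rho_{AB})$.

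Then I would prove the upper bound $K_{DI}(\rho_{N(A)})\le K^{\downarrow}(\rho_{AB})$. Applying Proposition~\ref{prop:cuts} to the bipartition $\mathcal{P}=A_1\,|\,(A_2\cdots A_N)$ and then the bipartite Theorem~\ref{thm:downarrow} gives $K_{DI}(\rho_{N(A)})\le K^{\downarrow}(\rho_{A_1|(A_2\cdots A_N)})$. Across this cut the state is $\rho_{AB}$ (with $A$ on one side and $B$ on the other) tensored with all of the $\Phi^{(i)}$, and every $\Phi^{(i)}$ lies entirely inside the single group $(A_2\cdots A_N)$, i.e.\ it is a local ancilla $\omega_{B'}$ on the $B$-side. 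It remains to show $K^{\downarrow}(\rho_{AB}\otimes\omega_{B'})\le K^{\downarrow}(\rho_{AB})$. For any measurement whose $B$-side POVM $\{M^{x}_{a}\}$ acts jointly on $B$ and $B'$, the element $N^{x}_{a}\coloneqq \Tr_{B'}[M^{x}_{a}(\mathbbm{1}_B\otimes\omega_{B'})]$ is a valid POVM on $B$ alone reproducing the same behavior on $\rho_{AB}$; choosing a compatible state $\sigma_{AB}$ for this effective measurement with $K_{DD}(\sigma_{AB})\le K^{\downarrow}(\rho_{AB})$ and noting $K_{DD}(\sigma_{AB}\otimes\omega_{B'})=K_{DD}(\sigma_{AB})$ (a product ancilla does not change the device-dependent key) furnishes a compatible state achieving the claimed bound. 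Taking the supremum over the device's measurement completes the step.

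Combining the two bounds yields $K_{DD}(\rho_{N(A)})-K_{DI}(\rho_{N(A)})\ge K_{DD}(\rho_{AB})-K^{\downarrow}(\rho_{AB})\ge c$, as desired. I expect the main obstacle to be the upper bound, specifically the local-ancilla reduction $K^{\downarrow}(\rho_{AB}\otimes\omega_{B'})\le K^{\downarrow}(\rho_{AB})$: the effective-POVM argument must be carried out uniformly over all device measurements (which may act jointly on $B$ and the possibly entangled ancilla), and one must confirm that the infimum over compatible states in the definition of $K^{\downarrow}$ is not enlarged by the extra system. The relay lower bound and the dimension bookkeeping are routine by comparison.
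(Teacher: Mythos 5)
Your proof is correct and follows essentially the same route as the paper's: a path (chain) construction, the one-time-pad relay argument giving $K_{DD}(\rho_{N(A)})\ge K_{DD}(\rho_{AB})$, and the cut bound of Proposition~\ref{prop:cuts} together with Theorem~\ref{thm:downarrow} reduced to the bipartite quantity $K^{\downarrow}(\rho_{AB})$. The two deviations are minor and both check out: placing maximally entangled states on edges $2,\dots,N-1$ instead of further copies of $\rho_{AB}$ is exactly the freedom the paper itself grants in the Remark following its Corollary (any state on the remaining systems with $K_{DD}$ at least $K_{DD}(\rho_{AB})$ works), and the step you flag as the main obstacle, $K^{\downarrow}(\rho_{AB}\otimes\omega_{B'})\le K^{\downarrow}(\rho_{AB})$, is in fact easier than you fear: compatibility is a property of the behavior alone, and since the effective POVM $N^{x}_{a}=\Tr_{B'}[M^{x}_{a}(\mathbbm{1}_B\otimes\omega_{B'})]$ reproduces the same behavior on $\rho_{AB}$, the set of devices compatible with $(\rho_{AB}\otimes\omega_{B'},\mc{M})$ coincides with that for $(\rho_{AB},(M_A,N_B))$, whose infimum is at most the supremum over bipartite measurements defining $K^{\downarrow}(\rho_{AB})$ --- your explicit argument is, if anything, a cleaner rendering of the step the paper justifies only tersely ("any distillation protocol between $A_1^1$ and $(A_1^2\ldots A_N)$ is a particular protocol distilling key between $A_1^1$ and $A_1^2$").
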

\begin{proof}
Consider a state $\rho_{N(A)}$ constructed as a path
made of state $\rho_{AB}$ (as, e.g., in a line of a quantum repeater):
\begin{align}
    \widetilde{\rho}_{N(A)}\coloneqq 
    \rho_{A_1^1A_1^2}\otimes \rho_{A_2^1A_2^2}\otimes\rho_{A_3^1A_3^2}\otimes\ldots  \otimes\rho_{A_{N-1}^1A_{N-1}^2}.
    \label{eq:path}
\end{align}
Here $\rho_{A_1^1A_1^2}=\rho_{A_2^1A_2^2}= \ldots =\rho_{A_{N-1}^1A_{N-1}^2}=\rho_{AB}$ and by the way of notation we have
$A_1^1\equiv A_1$ and $A_1^2A_2^1\equiv A_2$,\ldots  , $A_{N-1}^2\equiv A_N$.
That is, the first party has only system $A_1^1$ and the last only system $A_{N-1}^2$,
while the $i$th party for $1<i<N$ has systems $A_{i-1}^2A_i^1$ at hand.

Since the states $\rho_{A_i^1A_i^2}$ form
a {\it spanning tree}  of a graph of $N$ systems (in fact a path),
we can follow the lower bound given in Section~VI~A of Ref.~\cite{DBWH19} and note that \begin{equation}
    K_{DD}(\widetilde{\rho}_{N(A)})\geq \min_i K_{DD}(\rho_{A_i^1A_i^2})=K_{DD}(\rho_{AB}).
    \label{eq:dd_bound}
\end{equation}
Indeed, the parties can first distill a key at rate $K_{DD}(\rho_{AB})$ along the edges of the path. Denote such distilled keys by $k_{ij}$ between nodes $i$ and $j$. Further, $A_1$ can XOR her key $k_{12}$ with a locally generated private random bit string $r$ of length $K_{DD}(\rho_{AB})$ and send $k_{12}\oplus r$ to $A_2$;
further, $A_2$ can obtain $r=k_{12}\oplus (k_{12}\oplus r)$ and send it to the next party by XORing it with the key $k_{23}$. This process repeated $N-1$ times, leaves all the parties knowing $r$, which remained secret due to one-time pad encryption by the keys $k_{12},k_{23},\ldots  ,k_{N-1,N}$.
It then suffices to note that,
by Proposition~\ref{prop:cuts},
\begin{align}
&K_{DI}(\widetilde{\rho}_{N(A)},{\cal M})\leq 
\nonumber \\&\inf_{(\sigma_{A^1_1:(A_1^2,A_2^1\ldots  A_N)},{\cal L})=(\rho_{A^1_1:(A_1^2,A_2^1\ldots  A_N)},{\cal M})} K_{DD}(\sigma_{A_1^1:(A_1^2\ldots  A_N)})  \\&\leq
\inf_{(\sigma_{A^1_1:A_1^2},{\cal L})=(\rho_{A^1_1:A_1^2},{\cal M})} K_{DD}(\sigma_{A_1^1:A_1^2}).
\label{eq:cut_lb}
\end{align} 
This is because there are no more conference keys than the number of device-dependent keys distilled in the cut $A_1^1:(A_1^2,\ldots , A_{N-1}^2)$.
The latter is also upper bounded by the key distilled between systems $A_1^1$ and $A_1^2$. This is due to the fact that any distillation protocol between $A_1^1$ and $(A_1^2,\ldots , A_N)$ is a particular protocol distilling key between systems $A_1^1$ and $A_1^2$.

Taking the supremum over ${\cal M}$ on both sides of the inequality~\eqref{eq:cut_lb}, we obtain
\begin{align}
    &K_{DI}(\widetilde{\rho}_{N(A)})\equiv \sup_{{\cal M}} K_{DI}(\widetilde{\rho}_{N(A)},{\cal M}) \nonumber \\
   \leq & K^{\downarrow}(\rho_{A_1^1:A_1^2})\equiv 
    \sup_{\cal M}\inf_{(\sigma_{A^1_1:A_1^2},{\cal L})=(\rho_{A^1_1:A_1^2},{\cal M})} K_{DD}(\sigma_{A^1_1:A_1^2})\nonumber \\
    & 
    =K^{\downarrow}(\rho_{AB}).
\end{align}
Hence we get $K_{DI}(\rho_{N(A)})\leq K^{\downarrow}(\rho_{AB})$. This fact, by Eq.~\eqref{eq:dd_bound}, and the fact that by assumption $K_{DD}(\rho_{AB}) - K^{\downarrow}(\rho_{AB})\geq c> 0$ imply the following
chain of inequalities:
\begin{equation}
    K_{DD}(\widetilde{\rho}_{N(A)})\geq K_{DD}(\rho_{AB}) > K^{\downarrow}(\rho_{AB}) \geq K_{DI}(\widetilde{\rho}_{N(A)}).
\end{equation} The above implies then the desired gap $K_{DD}(\widetilde{\rho}_{N(A)})-K_{DI}(\widetilde{\rho}_{N(A)})>0$. Moreover, this gap is as large as $c>0$ due to the assumption that 
$K_{DD}(\rho_{AB})-K_{DI}(\rho_{AB})\geq c >0$. The claim about dimensions follows from the form of the state given in Eq.~\eqref{eq:path}.
\end{proof}
From Ref.~\cite{CFH21} we have the immediate corollary that there is a gap between the DI-CKA and DD-CKA.
\begin{corollary} For any $N$ there is a state $\widetilde{\rho}_{N(A)}$ for which there is
\begin{equation}
K^{iid}_{DI,dev}(\widetilde{\rho}_{N(A)}) < K_{DD}(\widetilde{\rho}_{N(A)}).
\end{equation}
\end{corollary}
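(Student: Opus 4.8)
The plan is to derive this statement as an immediate consequence of Theorem~\ref{thm:gap} together with the existence of a suitable bipartite ``seed'' state. First I would invoke the result of Ref.~\cite{CFH21}, which exhibits a bipartite state $\rho_{AB}$ satisfying $K_{DD}(\rho_{AB}) > K^{\downarrow}(\rho_{AB})$; equivalently, there is a constant $c>0$ with $K_{DD}(\rho_{AB}) - K^{\downarrow}(\rho_{AB}) \geq c$. This is precisely the hypothesis required to trigger Theorem~\ref{thm:gap}, so no further construction of a bipartite example is needed.

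Next, for the given $N$, I would feed this $\rho_{AB}$ into the path construction of Theorem~\ref{thm:gap}, producing the $N$-partite state $\widetilde{\rho}_{N(A)}$ defined in Eq.~\eqref{eq:path}. The theorem then guarantees
\begin{equation}
K_{DD}(\widetilde{\rho}_{N(A)}) - K_{DI}(\widetilde{\rho}_{N(A)}) \geq c > 0,
\end{equation}
where $K_{DI} \equiv K^{iid}_{DI,dev}$ denotes the device-independent conference key rate in the i.i.d.~device setting. Rearranging yields the strict separation $K^{iid}_{DI,dev}(\widetilde{\rho}_{N(A)}) < K_{DD}(\widetilde{\rho}_{N(A)})$, which is the claimed statement, and it holds for every $N$ since the construction applies uniformly.

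The only genuinely non-routine point is the very first step: one must be certain that the bipartite gap established in Ref.~\cite{CFH21} is \emph{strict} and quantified by an explicit positive constant $c$, as opposed to merely asymptotic or non-strict. Once this is in hand, Theorem~\ref{thm:gap} already transfers the full bipartite gap $c$ into the multipartite regime without any loss, so no additional estimates or limiting arguments are required, and the corollary follows directly.
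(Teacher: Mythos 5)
Your proposal is correct and takes essentially the same approach as the paper: invoke the bipartite state of Ref.~\cite{CFH21} exhibiting the strict gap $K^{\downarrow}(\rho_{AB})<K_{DD}(\rho_{AB})$, and feed it into the path construction of Eq.~\eqref{eq:path} so that Theorem~\ref{thm:gap} yields the $N$-partite separation. Your cautionary first step is indeed satisfied, since Ref.~\cite{CFH21} establishes a strict, quantified gap, which is all Theorem~\ref{thm:gap} requires.
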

\begin{proof}
Reference~\cite{CFH21} shows an example of a bipartite state $\rho_{AB}$ with the gap $K^{\downarrow}(\rho_{AB})<K_{DD}(\rho_{AB})$. The construction given in Eq.~\eqref{eq:path} based on this $\rho_{AB}$ proves the thesis via Theorem~\ref{thm:gap}.
\end{proof}
We note also that
a bound similar to that in the above corollary holds for $K^{iid}_{DI,par}$ and $K_{DI}$ itself due to the fact that $K^{iid}_{DI}\geq K_{DI}$ by definition \cite{CFH21}.
We can modify the proof technique shown above to see
the following general remark.
\begin{remark}
In the above construction one need not use only the state $\rho_{A_1^1A_1^2}^{\otimes k}$ having  $K^\downarrow(\rho_{A_1^1A_1^2}) < K_{DD}(\rho_{A_1^1A_1^2})$. 
In fact the state on systems $A_2^1A_2^2\ldots  A_{N-1}^1A_{N-1}^2$ can be an {\it arbitrary} state having $K_{DD}(\rho_{A_2^1A_2^2\ldots  A_{N-1}^1A_{N-1}^2})\geq K_{DD}(\rho_{A_1^1A_1^2})$. It can be even a $\Phi^{{\rm GHZ}}_{\vv{N-1}}$ state of arbitrary large local
dimension. This is with no change in the above proof if only $\rho_{A_1^1A_1^2}$ is on systems $A_1^1A_1^2$ with the gap we have mentioned. See Fig.~\ref{fig:multi} for the tripartite example.
\end{remark}

\section{DI-CKA versus genuine nonlocality and entanglement}\label{sec:GNLandENT}
We now discuss the topic of genuine nonlocality and entanglement in the context of the DI-CKA, introducing the notion of {\it quantum locality}.

We say that a behavior $P(\bf{a}|\bf{x})$ is
{\it local in a cut} $(A_{i_1}\ldots A_{i_k}):(A_{i_{k+1}}\ldots  A_{i_N})$ if it can be written as a product
of two behaviors on systems $A_{i_1}..A_{i_k}$ and $A_{i_{k+1}}\ldots  A_{i_N}$, respectively.
The behavior $P(\bf{a}|\bf{x})$ is {\it  genuinely non-local} if and only if it is not 
a mixture of behaviors that are a
product in at least one cut. 

We show that any behavior from
which the parties draw the conference key in a {\it single-shot} (single run) must exhibit genuine nonlocality. The scenario
of a single run was considered in the
context of a non-signaling adversary
\cite{hanggi-2009,masanes-2009-102}. 
For that reason, we depart from the traditional definition of DI quantum key distillation rate by considering
a single-shot DI quantum key distillation rate obtained by
an LOPC post-processing of a distribution
obtained from some behavior $P({\bf a}|{\bf x})$ when all the parties measure all the inputs $\bf x$ in parallel at the same time.

For the purpose of Theorem \ref{thm:single-shot} below,
by a ``local'' set we will mean 
the set of behaviors that are convex 
mixtures of behaviors that are a 
product in some cut and both behaviors
in the product have quantum realization.
We will denote this set by $\mathrm{LQ}$ (locally quantum). Any distribution which is not in $\mathrm{LQ}$ can be treated as genuinely non-local, although other definitions are adopted in the literature \cite{Bell-nonlocality}. Exemplary extreme behavior in this set is a product of the Tsirelson behavior
$P(a_1,a_2|x_1,x_2) =\Tr [\op{\Phi_{\vv{2}}}M^{x_1}_{a_1}\otimes M^{x_2}_{a_2}]$
with $M^{0}_{a_1}=\sigma_z$, $M^1_{a_1}=\sigma_x$, $M^0_{a_2}= \frac{(\sigma_x+\sigma_z)} {\sqrt{2}}$, $M^1_{a_2}=\frac{(\sigma_z-\sigma_x)}{ \sqrt{2}}$ ($\sigma_x$ and $\sigma_z$ being Pauli-$X$ and Pauli-$Z$ operators, respectively), and any deterministic local behavior $P(a_3|x_3)$: $P(a_1,a_2|x_1,x_2)P(a_3|x_3)$.
The theorem which we show below applies to the scenario where all the parties share a {\it single} copy of a device and measure the inputs in parallel to ensure nonsignaling. We give the definition of the key rate obtained in this setup in full analogy to Definition \ref{def:di-dev} as follows.
\begin{definition}
The maximum single-shot device-independent quantum key distillation rate of a device $(\rho,\mc{M})$ with independent and identically distributed behavior is defined as
\begin{equation}
    K^\mathrm{single-shot}_{DI,dev}(\rho,\mc{M})\coloneqq \inf_{\varepsilon>0} \sup_{\hat{\mc{P}}} \inf_{\eqref{eq:k-1}} \kappa^\varepsilon_{n} \left(\hat{\mc{P}}(\sigma,\mc{N})\right),
\end{equation}
where $\kappa_n^\varepsilon$ is the quantum key rate achieved for any security parameter $\varepsilon$ and measurements $\mc{N}$.

Here $\hat{\mc{P}}$ is a protocol composed of classical local operations and public (classical) communication  acting on a single copy of $(\sigma,\mc{N})$ which, composed with the measurement, results in a quantum local operations and public (classical) communication  protocol.
\end{definition}
We are ready to state the following theorem.
\begin{theorem}\label{thm:single-shot}
If a behavior $(\rho_{N(A)},\mc{M})$ satisfies
$K^{single-shot}_{DI,dev}(\rho_{N(A)},\mc{M})>0$ then it is not in $\mathrm{LQ}$.
\end{theorem}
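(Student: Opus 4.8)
The plan is to argue the contrapositive: I assume that the behavior is locally quantum, $P(\mathbf{a}|\mathbf{x})\in\mathrm{LQ}$, and show that the single-shot device-independent conference key vanishes, $K^{single-shot}_{DI,dev}(P(\mathbf{a}|\mathbf{x}))=0$. By the definition of $\mathrm{LQ}$ there is a decomposition
\begin{equation}
P(\mathbf{a}|\mathbf{x})=\sum_{\lambda}q_{\lambda}\,P^{(1)}_{\lambda}(\mathbf{a}_{S_{\lambda}}|\mathbf{x}_{S_{\lambda}})\,P^{(2)}_{\lambda}(\mathbf{a}_{\bar{S}_{\lambda}}|\mathbf{x}_{\bar{S}_{\lambda}}),
\end{equation}
where each term is a product across some cut $S_{\lambda}:\bar{S}_{\lambda}$ of $\{A_1,\dots,A_N\}$, and---crucially---both factors admit quantum realizations $(\rho^{(1)}_{\lambda},\mc{N}^{(1)}_{\lambda})$ and $(\rho^{(2)}_{\lambda},\mc{N}^{(2)}_{\lambda})$.

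The first step is to turn this classical decomposition into a single quantum strategy that the adversary can hold. I would set
\begin{equation}
\sigma_{N(A)E}\coloneqq\sum_{\lambda}q_{\lambda}\,\rho^{(1)}_{\lambda}\otimes\rho^{(2)}_{\lambda}\otimes\op{\lambda}_{E},
\end{equation}
together with the measurement $\mc{N}_{\lambda}$ assembled from $\mc{N}^{(1)}_{\lambda}$ and $\mc{N}^{(2)}_{\lambda}$ on the respective groups, and verify that, after tracing out the flag register $E$, the pair $(\sigma,\mc{N})$ reproduces $P(\mathbf{a}|\mathbf{x})$ exactly, so that $(\sigma,\mc{N})$ is compatible with the behavior. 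Here the hypothesis that both factors of every term are quantum-realizable is exactly what guarantees that such a compatible quantum state exists; this is the role played by the ``$\mathrm{Q}$'' in $\mathrm{LQ}$.

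The second step is to invoke the single-shot analogue of the bound in Eq.~\eqref{eq:di-bound10} (equivalently, Theorem~\ref{thm:downarrow} specialised to the single-shot rate): any compatible device upper bounds the device-independent rate, so
\begin{equation}
K^{single-shot}_{DI,dev}(P(\mathbf{a}|\mathbf{x}))\leq K^{single-shot}_{DD}(\sigma_{N(A)E}).
\end{equation}
It then remains to show that the right-hand side is zero. Since $E$ carries the classical flag $\lambda$ and is in the adversary's hands, I would condition on $\lambda$: the conference key secret against Eve is at most the $q_{\lambda}$-average of the conference key of the conditional states $\rho^{(1)}_{\lambda}\otimes\rho^{(2)}_{\lambda}$. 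For each fixed $\lambda$, conference key distillation among all $N$ parties is a special case of bipartite key distillation across the cut $S_{\lambda}:\bar{S}_{\lambda}$, hence it is bounded by the bipartite $K_{DD}$ across that cut; but across that cut the conditional state is a product $\rho^{(1)}_{\lambda}\otimes\rho^{(2)}_{\lambda}$, whose bipartite key is zero. Thus every conditional term contributes zero and $K^{single-shot}_{DD}(\sigma_{N(A)E})=0$, which closes the contrapositive.

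I expect the main obstacle to be the middle step: justifying rigorously that the adversary's possession of the flag $\lambda$ reduces the key of the mixture to the $\lambda$-average, even though the product cut $S_{\lambda}$ changes from term to term. A convex mixture of behaviors that are each local in a \emph{different} cut can look globally correlated, so the argument must genuinely use that Eve holds $\lambda$ (a convexity/flag property of $K_{DD}$ when the classical register sits in $E$) rather than any property of $P$ alone. A secondary technical point is to make the single-shot version of Eq.~\eqref{eq:di-bound10} precise and to handle the possibly infinite-dimensional purification of $\sigma_{N(A)E}$, as flagged in the Remark following Corollary~\ref{obs:ent-mes}.
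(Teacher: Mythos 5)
Your proposal is correct and takes essentially the same route as the paper's proof: decompose the $\mathrm{LQ}$ behavior into cut-product terms, give the classical flag $\lambda$ to Eve, and argue that each term, being product across some cut, yields zero conference key, so the flagged mixture has zero key. The paper phrases this as a contradiction and closes the per-term step by invoking Proposition~\ref{prop:cuts} and an attack achieving zero reduced cc-squashed entanglement, while you close it via convexity of $K_{DD}$ on mixtures flagged to Eve and the vanishing of bipartite key on product states --- a presentational difference only, with your version being somewhat more explicit about constructing the compatible quantum device from the quantum realizability of the factors.
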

\begin{proof}
The proof goes by contradiction.
Suppose a behavior $p(\bf{a}|\bf{x}) \equiv (\rho_{N(A)},\mc{M})$
is not genuinely nonlocal. That
is, it can be expressed as a convex
combination of behaviors which
are a product in at least one cut denoted by $(A^{(i)}_{j_1}\ldots  A^{(i)}_{j_k}):(A^{(i)}_{j_{k+1}}\ldots  A_{j_N}^{(i)})$ for the $i$th behavior in the combination. We express this as 
\begin{equation}
    \sum_iq_i p_i({\bf a}|{\bf x})_{(A^{(i)}_{j_1}\ldots  A^{(i)}_{j_k}):(A^{(i)}_{j_{k+1}}\ldots  A_{j_N}^{(i)})},
    \label{eq:mixture}
\end{equation}
where $p_i({\bf a}|{\bf x})$ are some quantum behaviors. Consider then a device 
$p_i$ as a bipartite one, with
parties $(A^{(i)}_{j_1}\ldots  A^{(i)}_{j_k})$ together forming $A'$ and $(A^{(i)}_{j_{k+1}}\ldots  A_{j_N}^{(i)})$ forming $A''$.  Such a device
has zero bipartite DI quantum keys, as
it is a product in cut $A':A''$. By virtue of purification, Eve can have
access to the mixture~\eqref{eq:mixture}, knowing
which of the mixing terms $i$ happened.
By Proposition \ref{prop:cuts} we have that
from any of such terms, one can not
draw a conference key, as Eve has
a local hidden variable model for it.
Indeed, the right-hand side of \eqref{eq:trivial_bound} is then $0$, as Eve can adopt an attack which, e.g., makes zero reduced c-squashed entanglements \cite{KHD21}. We thus obtained the desired contradiction.
\end{proof}

Let us now recall the notion of genuine entanglement.
We say that a multipartite state $\rho_{A_1A_2\ldots  A_N}$ is separable in a cut $(A_{i_1}..A_{i_k}):(A_{i_{k+1}}\ldots  A_{i_N})$ if it can be written as convex mixtures of product states between systems $A_{i_1}..A_{i_k}$ and $A_{i_{k+1}}\ldots  A_{i_N}$. If a multipartite state $\rho_{A_1A_2\ldots  A_N}$ can be written as a mixture of separable states that are a product in at least one cut then it is called biseparable. We say that $\rho_{A_1A_2\ldots  A_N}$ is {\it genuinely entangled} if and only if it is not a mixture of separable states that are a product in at least one cut. It was shown in Ref.~\cite{BFF+16} that there exist $N$-partite states for all $N>2$ where some genuinely entangled states admit a fully LRHV model, i.e., where all parties are separated.

Let ${\rm GE}(N(A))$, ${\rm BS}(N(A))$, and ${\rm FS}(N(A))$ denote the set of all $N$-partite states $\rho_{A_1A_2\ldots  A_N}$ that are genuinely entangled, biseparable, and fully separable, respectively (see Ref.~\cite{DBWH19}). For $n$ copies of $N$-partite state, when we consider partition across designated $N$ parties, we denote local groupings by $N(A^{\otimes n})$.

\begin{remark}
It is necessary to consider a single-shot DI key in Theorem \ref{thm:single-shot} because the set of LQ behaviors is not closed under tensor product. This is for the same reason that the set of biseparable states is not closed under a tensor product.
\end{remark}

The following theorem follows from Corollary~\ref{obs:ent-mes} as well as Proposition~2 of Ref.~\cite{DBWH19}.
\begin{theorem}
The maximum device-independent conference key agreement rates of a device $(\rho_{N(A)},\mc{M})$ are upper bounded by
\begin{align}
     K_{DI,dev}(\rho_{N(A)}, \mathcal{M})   & \leq  \inf_{(\sigma_{N(A)},\mc{L})= (\rho_{N(A)},\mc{M})} E^{\infty}_{GE} (\sigma_{N(A)}),\\
       K_{DI,par}(\rho_{N(A)}, \mathcal{M})   & \leq  \inf_{\underset{P_{err}(\sigma_{N(A)},{\cal L})= P_{err}(\rho_{N(A)},{\cal M})}{\omega(\sigma_{N(A)},{\cal L})=\omega(\rho_{N(A)},{\cal M})}} E^{\infty}_{GE}(\sigma_{N(A)}),
     \end{align}
     where $E^{\infty}_{GE} (\varsigma)$ is the regularized relative entropy of genuine entanglement~\cite{DBWH19} of a state $\varsigma_{A_1A_2\ldots  A_N}$, with
     \begin{equation}
         E^{\infty}_{GE} (\varsigma)=\inf_{\varphi\in{\rm BS(N(A^{\otimes n}))}}\lim_{n\to\infty}\frac{1}{n}D(\varsigma^{\otimes n}\Vert \varphi)
     \end{equation}
      where $D(\rho\Vert\sigma)$ is the relative entropy between two states $\rho$ and $\sigma$, with $D(\rho\Vert\sigma)= \Tr\left[\rho(\log_2\rho-\log_2\sigma)\right]$ if $\supp{\rho}\subseteq\supp{\sigma}$; otherwise it is $\infty$~\cite{Ume62}. 
      \label{thm:E_R}
\end{theorem}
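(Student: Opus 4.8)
The plan is to read off Theorem~\ref{thm:E_R} as the specialization of Corollary~\ref{obs:ent-mes} to a single, concrete entanglement measure, so that essentially all of the work is deferred to one external input. The only fact I need to supply beyond Corollary~\ref{obs:ent-mes} is that the regularized relative entropy of genuine entanglement is an upper bound on the device-dependent conference key rate, i.e.
\begin{equation}
K_{DD}(\sigma)\leq E^{\infty}_{GE}(\sigma)
\end{equation}
for every $N$-partite state $\sigma_{N(A)}$; this is exactly Proposition~2 of Ref.~\cite{DBWH19}. Granting it, I would set $Ent=E^{\infty}_{GE}$ in the chain \eqref{eq:di-e-1}--\eqref{eq:FCHbound}, which for the first displayed inequality gives
\begin{equation}
K_{DI,dev}(\rho_{N(A)},\mc{M})\leq \inf_{(\sigma_{N(A)},\mc{L})=(\rho,\mc{M})}K_{DD}(\sigma)\leq \inf_{(\sigma_{N(A)},\mc{L})=(\rho,\mc{M})}E^{\infty}_{GE}(\sigma),
\end{equation}
which is the asserted bound.

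To make the argument self-contained I would sketch why $K_{DD}\leq E^{\infty}_{GE}$ via the standard relative-entropy-of-entanglement template, adapted to the biseparable target set. The ideal conference key state $\tau^{(d_K)}_{N(A)}$ of Eq.~\eqref{eq:key-state} is a classically correlated mixture of product states on the key registers, hence fully separable and in particular a member of $\BS(N(A))$, so its genuine-entanglement content vanishes. Combining (i) monotonicity of the relative entropy of genuine entanglement under the qLOPC operations used in the device-dependent distillation of Eq.~\eqref{eq:kdd}, (ii) its regularized subadditivity, and (iii) its asymptotic continuity to absorb the $\varepsilon$-approximation $\Lambda(\sigma^{\otimes n})\approx_\epsilon \tau^{(d_n)}$, one gets $\tfrac{1}{n}\log d_n\leq E^{\infty}_{GE}(\sigma)+o(1)$; letting $n\to\infty$ and $\epsilon\to 0$ yields the claimed inequality. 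For the parallel statement I would invoke the analog of Corollary~\ref{obs:ent-mes} in which the compatibility constraint is relaxed from matching the full behavior, Eq.~\eqref{eq:k-1}, to matching only the Bell violation and the QBER, Eqs.~\eqref{eq:k-2}--\eqref{eq:k-3}. Since this enlarges the feasible set of $\sigma$, the same infimum argument applies verbatim, giving the second displayed bound and remaining consistent with $K_{DI,par}\leq K_{DI,dev}$.

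The hard part will not be the chaining of inequalities but guaranteeing that $K_{DD}(\sigma)\leq E^{\infty}_{GE}(\sigma)$ survives when $\sigma_{N(A)}$ is infinite-dimensional: the device-independent feasible sets in both infima range over states compatible with $(\rho,\mc{M})$ of arbitrary dimension, as stressed in the Remark following Corollary~\ref{obs:ent-mes}. I would therefore have to check that the LOPC-monotonicity and asymptotic continuity properties invoked above persist on separable Hilbert spaces, or else reduce each estimate to the finite-dimensional honest systems $A_1,\dots,A_N$ on which the relative entropy $D(\cdot\Vert\cdot)$ is unambiguously defined and the approximation in trace norm behaves well. This is precisely the infinite-dimensional subtlety the manuscript flags as its main technical contribution, and it is the step where genuine care is needed rather than routine substitution.
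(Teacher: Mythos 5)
Your proposal matches the paper's own argument: the paper proves Theorem~\ref{thm:E_R} in exactly one line, by combining Corollary~\ref{obs:ent-mes} (instantiated with $Ent=E^{\infty}_{GE}$) with Proposition~2 of Ref.~\cite{DBWH19}, which supplies $K_{DD}(\sigma)\leq E^{\infty}_{GE}(\sigma)$. Your additional sketch of why $E^{\infty}_{GE}$ bounds $K_{DD}$ and your caution about infinite-dimensional compatible states go beyond what the paper writes, but the core route is the same.
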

 
We note here that there is a trivial bound that can
be obtained from Theorem \ref{thm:E_R} above, which
is encapsulated in the following corollary.
\begin{corollary}
For any state $\rho_{N(A)}$ with $\min_{i \in \{1\ldots  N\}} d_{A_i} =: d$ there is
\begin{align}
    &K_{DI}(\rho_{N(A)})\equiv \sup_{\cal M} K_{DI}(\rho_{N(A)},{\cal M})\leq \nonumber \\&
    \min \{ p \log_2 d: p \in [0,1], \rho = p\rho'+(1-p)\rho_{fs}, \nonumber \\ 
    &\rho_{fs}\in {\rm FS(N(A))}\},
\end{align}
\label{cor:trivial}
\end{corollary}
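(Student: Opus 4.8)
The plan is to derive the bound as a corollary of Theorem~\ref{thm:E_R} combined with an elementary estimate that exploits the fully separable summand in the decomposition. First I would reduce the statement to a bound on $E^{\infty}_{GE}(\rho)$ alone. Taking the trivial admissible choice $(\sigma_{N(A)},\mc{L})=(\rho_{N(A)},\mc{M})$ in the infimum of Theorem~\ref{thm:E_R} gives $K_{DI,dev}(\rho_{N(A)},\mc{M})\leq E^{\infty}_{GE}(\rho)$, and since the right-hand side is independent of $\mc{M}$, taking the supremum over $\mc{M}$ yields $K_{DI}(\rho_{N(A)})\leq E^{\infty}_{GE}(\rho)$. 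It therefore suffices to show that $E^{\infty}_{GE}(\rho)\leq p\log d$ for every decomposition $\rho=p\rho'+(1-p)\rho_{fs}$ with $\rho_{fs}\in{\rm FS}$, and then to minimize over all such decompositions, which produces exactly the stated $\min$.

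The core of the argument is a single fixed cut. Let $i^{*}$ be a party of minimal local dimension $d=\min_{i}d_{A_i}$, and let $B$ denote the bipartite cut $A_{i^{*}}:\bar{A}_{i^{*}}$ separating party $i^{*}$ from the remaining $N-1$ parties (applied copy-wise on $n$ copies). Since any state that is separable across $B$ is, in particular, product in at least one cut and hence biseparable, the set of $B$-separable states is contained in ${\rm BS}(N(A^{\otimes n}))$, so $E_{GE}(\varsigma)\leq E_R^{B}(\varsigma)$, where $E_R^{B}$ is the bipartite relative entropy of entanglement across $B$. I would then chain the standard properties of $E_R^{B}$: subadditivity under tensor products across a fixed cut, convexity (the $B$-separable set is convex and relative entropy is jointly convex), vanishing on states separable across $B$ (in particular on $\rho_{fs}\in{\rm FS}$), and the dimension bound $E_R^{B}(\rho')\leq\log d$ coming from the smaller side of $B$ having dimension $d$. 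Concretely,
\[
\tfrac{1}{n}E_{GE}(\rho^{\otimes n})\leq\tfrac{1}{n}E_R^{B}(\rho^{\otimes n})\leq E_R^{B}(\rho)\leq p\,E_R^{B}(\rho')+(1-p)E_R^{B}(\rho_{fs})\leq p\log d,
\]
and letting $n\to\infty$ gives $E^{\infty}_{GE}(\rho)\leq p\log d$.

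I expect the main obstacle to be the regularization: one cannot simply bound the regularized $E^{\infty}_{GE}$ by its single-copy value, because ${\rm BS}$ is not closed under tensor products (as the paper itself remarks), so $E_{GE}$ need not be subadditive. The fixed-cut trick is precisely what circumvents this, since the bipartite relative entropy of entanglement across a \emph{single} cut $B$ \emph{is} subadditive, allowing the $\tfrac1n$ to dilute the $n$-copy bound down to the single-copy value $E_R^{B}(\rho)$. The one point I would verify carefully is that $B$, extended copy-wise, remains a genuine $N$-partite bipartition of the grouping $N(A^{\otimes n})$ for which ${\rm BS}(N(A^{\otimes n}))$ is defined, so that the inclusion $E_{GE}\leq E_R^{B}$ is legitimate at every blocklength $n$; granting this, the remaining steps are routine applications of well-known properties of $E_R$.
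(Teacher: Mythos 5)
Your proof is correct, and at the top level it follows the same skeleton as the paper's: invoke Theorem~\ref{thm:E_R} with the trivial compatible pair $(\sigma,\mc{L})=(\rho,\mc{M})$ to reduce everything to bounding $E^{\infty}_{GE}(\rho)$, then dominate that regularized quantity by a single-copy relative-entropy measure that is convex, vanishes on the fully separable summand $\rho_{fs}$, and carries a $\log d$ dimension bound. The genuine difference is which separable set defines that measure, and your choice is the one that actually works. The paper uses $E_R(\rho)=\inf_{\kappa\in{\rm FS}}D(\rho\Vert\kappa)$ and justifies its dimension bound by $E_R(\rho)\leq D(\rho\Vert\rho_{A_i}\otimes\rho_{A_{\neq i}})=I(A_i:A_{\neq i})\leq\log d$; this step is doubly problematic, since $\rho_{A_i}\otimes\rho_{A_{\neq i}}$ is in general \emph{not} fully separable (it may be entangled within $A_{\neq i}$), hence not an admissible candidate in the FS infimum, and since quantum mutual information is only bounded by $2\log d$. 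In fact the claimed property fails outright for the FS-based measure: for $\rho=\op{0}_{A_1}\otimes\Phi_{A_2A_3}$ with $\Phi$ maximally entangled of rank $D>2$, monotonicity under tracing out $A_1$ gives $E_R(\rho)\geq\log D>\log d=\log 2$. Your fixed-cut quantity $E_R^{B}$ across $A_{i^{*}}:\bar{A}_{i^{*}}$ avoids all of this: $B$-separable states are biseparable (they are mixtures of $B$-product states, which suffices for inclusion in ${\rm BS}$ under the paper's definition, even though your wording conflates ``separable'' with ``product''), so $E_{GE}\leq E_R^{B}$; $B$-separability is closed under copy-wise tensor products, giving the subadditivity that dilutes the regularization; and $E_R^{B}\leq\log d$ is the standard bipartite fact, via $E_R^{B}\leq E_F\leq\log d$ with $E_F$ the entanglement of formation across $B$ (note the mutual-information route would only yield $2\log d$). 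In short, both routes aim at the same corollary, but yours repairs precisely the step at which the paper's own argument breaks.
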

\begin{proof}
Given any decomposition of a state $\rho_{N(A)}$ into
$\rho_{N(A)} = p \rho' + (1-p) \rho_{fs}$, where the state $\rho_{fs}$ is a fully separable state, we have
\begin{align}
&K_{DI}(\rho_{N(A)})\leq \sup_{\cal M} K_{DI,dev}^{iid}(\rho_{N(A)},{\cal M}) \nonumber\\& \leq \sup_{\cal M} \inf_{(\sigma_{N(A)},{\cal L})=(\rho,{\cal M})}E^{\infty}_{GE}(\sigma_{N(A)})\leq \sup_{\cal M} E_R(\rho_{N(A)}) \nonumber\\& \leq p E_R(\rho')\leq p \min_i \log_2 d_{A_i},\end{align} 
where we have used Theorem \ref{thm:E_R} (also see Corollary~6 of \cite{DBWH19}) and
the fact that $E_R(\rho)=\inf_{\kappa \in {\rm FS}}D(\rho\Vert\kappa)$ is (i) convex, (ii) zero on fully separable states, and (iii) does not exceed the minimum logarithm of dimensions of the input state, which can be proved by noticing
that $E_R(\rho)\leq D(\rho\Vert\rho_{A_i}\otimes \rho_{A_{\neq i}})= I(A_i:A_{\neq i})\leq \log_2 d$ where $A_i$ has minimal dimension among systems $A_1,\ldots , A_N$.
\end{proof}
We presented this bound in Fig.~\ref{fig:attack2} in Section~\ref{sec:with_figure}
and we saw that it is indeed above the upper bounds which we derive in Section~\ref{sec:multi_squashed}.

\section{Conclusion}\label{sec:discussion}
We have demonstrated a number of upper bounds on the quantum secure conference key, generalizing (i) the results of Ref.~\cite{KHD21} regarding a relative entropy based bound and
(ii) the results of Ref.~\cite{FBL+21} regarding the reduced c-squashed entanglement.

Interestingly, the approach of Ref.~\cite{FBL+21} does not result in zero keys in any noise regimes for
the parity CHSH game of Ref.~\cite{RMW18}. It would be important to see if this can be improved by changing Eve's strategy or the bound needs to be changed.

We have also shown that the fundamental gap between device-independent and device-dependent keys also holds in the multipartite case. We have given an exemplary state which is based directly on the bipartite states given in Ref.~\cite{CFH21}. It is interesting if
such a state exists in lower dimensions
or even possibly on $N$ qubits.

Finally, our results hold for the static case of quantum states. The next step would be to generalize the results of Ref.~\cite{KHD21} for the dynamic case of quantum channels to the multipartite scenario.

\medskip
{\it Note added}.--- 
The topic of upper bounds on the DI-CKA is also studied in the parallel work of~\cite{PKBW21}. Comparison between basic approaches (i.e., for the DI quantum key distribution between two honest parties) used in Ref.~\cite{PKBW21} and in this paper to get upper bounds on DI-CKA is discussed in Ref.~\cite{KHD21}.

\begin{acknowledgements}
KH acknowledges the Fulbright Program, Mark Wilde and Cornell ECE for hospitality during the Fulbright scholarship at the School of Electrical and Computer Engineering of Cornell University.
We acknowledge partial support by the Foundation for Polish Science (IRAP Project ICTQT, Contract No. MAB/2018/5, cofinanced by EU within Smart Growth Operational Programme). The International Centre for Theory of Quantum Technologies project (Contract No. MAB/2018/5) is carried out within the International Research Agendas Programme of the Foundation for Polish Science cofinanced by the European Union from the funds of the Smart Growth Operational Programme, axis IV: Increasing the research potential (Measure 4.3). M.W. acknowledges grant Sonata Bis 5 (Grant No. 2015/18/E/ST2/00327) from the National Science Center.  S.D. acknowledges Individual Fellowships at Universit\' {e} libre de Bruxelles; this project received funding from the European Union's Horizon 2020 research and innovation program under the Marie Skłodowska-Curie Grant Agreement No. 801505. S.D. also thanks Harish-Chandra Research Institute, Prayagraj (Allahabad, India) for  hospitality during his visit where part of this work was done. S.D. thanks Maksim E. Shirokov for pointing out to the Theorem 7 of Ref.~\cite{DSW18} and suggesting equality of the reduced c-squashed entanglement $E^c_{sq}(\rho,{\rm M})$ and the dual c-squashed entanglement $\widetilde{E}^c_{sq}(\rho,{\rm M})$ of a device $(\rho,{\rm M})$.

\end{acknowledgements}

\appendix
\section{Continuity statements}\label{sec:app:A}

There are the following lemmas.
\begin{lemma}[Alicki-Fannes-Winter continuity bounds~\cite{Win16}]
For states $\rho_{AB}$ and $\sigma_{AB}$, if $\frac{1}{2}\norm{\rho-\sigma}_1\leq \varepsilon\leq 1$, then 
\begin{equation}
    \abs{S(A|B)_{\rho}-S(A|B)_{\sigma}}\leq 2\varepsilon \log_2 d+g(\varepsilon),
\end{equation}
where $d=\dim(\mc{H}_A)<\infty$ and $g(\varepsilon)\coloneqq (1+\varepsilon)\log_2 (1+\varepsilon)-\varepsilon\log_2 \varepsilon$.
\end{lemma}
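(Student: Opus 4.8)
The plan is to reproduce Winter's flagged-state argument \cite{Win16}. First I would set $\varepsilon' \coloneqq \frac{1}{2}\norm{\rho-\sigma}_1 \le \varepsilon$ and decompose the traceless Hermitian operator $\rho-\sigma$ into its positive and negative parts, $\rho-\sigma = \Delta_+ - \Delta_-$ with $\Delta_\pm \ge 0$ and $\Tr\Delta_+ = \Tr\Delta_- = \varepsilon'$. Normalising yields two states $\omega_\pm \coloneqq \Delta_\pm/\varepsilon'$ on $AB$. The algebraic identity I would exploit is $\rho + \Delta_- = \sigma + \Delta_+$, so that the normalised operator $\Omega \coloneqq \frac{1}{1+\varepsilon'}(\rho+\Delta_-) = \frac{1}{1+\varepsilon'}(\sigma+\Delta_+)$ is a \emph{single} state on $AB$ obtainable from either $\rho$ or $\sigma$.

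Next I would introduce a classical flag $X$ and build the two cq-extensions
\begin{align}
\Omega^\rho_{XAB} &\coloneqq \frac{1}{1+\varepsilon'}\left(\rho_{AB}\otimes\op{0}_X + \Delta_{-}\otimes\op{1}_X\right), \nonumber\\
\Omega^\sigma_{XAB} &\coloneqq \frac{1}{1+\varepsilon'}\left(\sigma_{AB}\otimes\op{0}_X + \Delta_{+}\otimes\op{1}_X\right).
\end{align}
By construction both share the same marginal $\Omega_{AB}$ and the same binary flag distribution $\{\frac{1}{1+\varepsilon'},\frac{\varepsilon'}{1+\varepsilon'}\}$. Since $X$ is classical, the conditional entropy splits as a convex combination, $H(A|BX)_{\Omega^\rho} = \frac{1}{1+\varepsilon'}H(A|B)_\rho + \frac{\varepsilon'}{1+\varepsilon'}H(A|B)_{\omega_-}$, and likewise for $\Omega^\sigma$ with $\sigma$ and $\omega_+$.

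The two inequalities I would then combine are: \emph{conditioning reduces entropy}, $H(A|BX)_{\Omega^\rho} \le H(A|B)_\Omega$; and $H(A|BX)_{\Omega^\sigma} \ge H(A|B)_\Omega - H(X)$, i.e. $I(A:X|B)\le H(X)$, which holds because a classical register has $H(X|AB)\ge 0$. Subtracting these, both evaluated at the common marginal $\Omega_{AB}$, the $H(A|B)_\Omega$ terms cancel and leave $\frac{1}{1+\varepsilon'}\left(H(A|B)_\rho - H(A|B)_\sigma\right) \le H(X) + \frac{\varepsilon'}{1+\varepsilon'}\left(H(A|B)_{\omega_+} - H(A|B)_{\omega_-}\right)$. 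Using $\abs{H(A|B)_{\omega_\pm}} \le \log d$ and the identity $(1+\varepsilon')H(X) = (1+\varepsilon')\log(1+\varepsilon') - \varepsilon'\log\varepsilon' = g(\varepsilon')$ (after multiplying through by $1+\varepsilon'$), I obtain $H(A|B)_\rho - H(A|B)_\sigma \le 2\varepsilon'\log d + g(\varepsilon')$.

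Finally I would symmetrise by swapping the roles of $\rho$ and $\sigma$ to pass to the absolute value, and invoke monotonicity of $t\mapsto 2t\log d + g(t)$ on $[0,1]$ (its derivative is $2\log d + \log\frac{1+t}{t}>0$) together with $\varepsilon'\le\varepsilon\le 1$ to replace $\varepsilon'$ by $\varepsilon$. The main obstacle is the flagged construction itself: the key insight is that $\Omega^\rho$ and $\Omega^\sigma$ carry the identical $AB$-marginal $\Omega_{AB}$, which is precisely what lets the two oppositely-directed conditional-entropy inequalities be played against one another so the $\Omega$-terms drop out. The remaining ingredients (the convex split over the classical flag, the bound $I(A:X|B)\le H(X)$, and the evaluation $(1+\varepsilon')H(X)=g(\varepsilon')$) are routine.
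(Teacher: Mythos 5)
Your proof is correct: the Jordan decomposition with $\Tr\Delta_\pm=\varepsilon'$, the common-marginal flagged states, the two opposing inequalities $H(A|BX)\le H(A|B)$ and $I(A{:}X|B)\le H(X)$, the identity $(1+\varepsilon')H(X)=g(\varepsilon')$, and the final monotonicity argument in $\varepsilon'$ all check out (only the trivial case $\varepsilon'=0$ needs to be set aside before normalising $\omega_\pm$). The paper itself states this lemma without proof, citing Winter~\cite{Win16}, and your reconstruction is precisely Winter's flagged-state argument from that reference, so there is nothing further to compare.
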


\begin{lemma}[from \cite{Shi17}]
If $d=\min\{\dim(\mc{H}_A),\dim(\mc{H}_B)\}<+\infty$, then
\begin{equation}\label{eqn:Shirokov}
    \abs{I(A;B|C)_{\rho}-I(A;B|C)_{\sigma}}\leq 2\varepsilon\log_2 d+2g(\varepsilon)
\end{equation}
for any states $\rho_{ABC}$ and $\sigma_{ABC}$, where $\varepsilon=\frac{1}{2}\norm{\rho-\sigma}_1$.
\label{lem:Shirikov}
\end{lemma}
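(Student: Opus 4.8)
The plan is to reduce the conditional mutual information to a difference of conditional entropies and then import the Alicki--Fannes--Winter (AFW) bound recalled above (from Ref.~\cite{Win16}). Using the chain-rule identity $I(A;B|C)=H(A|C)-H(A|BC)$, together with its mirror image $I(A;B|C)=H(B|C)-H(B|AC)$ (which lets one keep whichever of $A,B$ carries the smaller dimension, explaining the appearance of $d=\min\{\dim\mc{H}_A,\dim\mc{H}_B\}$), I would first note that the partial trace is a contraction in trace norm, so $\frac{1}{2}\norm{\rho_{AC}-\sigma_{AC}}_1\le\varepsilon$ and $\frac{1}{2}\norm{\rho_{ABC}-\sigma_{ABC}}_1\le\varepsilon$. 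Applying the triangle inequality and then the AFW bound to each of the two conditional-entropy terms already yields continuity, showing that the left-hand side is $O(\varepsilon\log d+g(\varepsilon))$ with \emph{no} constraint at all on $\dim\mc{H}_C$; this last feature is exactly what makes the statement usable when $C$ (the adversary's system $E$) is infinite-dimensional.

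The quantitative subtlety is the constant. Bounding the two conditional entropies separately costs $2\varepsilon\log d+g(\varepsilon)$ apiece and so produces $4\varepsilon\log d+2g(\varepsilon)$, which overshoots the claimed $2\varepsilon\log d+2g(\varepsilon)$ by a factor of two on the dimensional term (the $g(\varepsilon)$ coefficient, however, is already correct). To recover the sharp factor I would follow the argument of Ref.~\cite{Shi17} rather than splitting the expression: introduce the normalized positive and negative parts $\rho_{\pm}$ of $\rho-\sigma$, so that $\rho-\sigma=\varepsilon(\rho_{+}-\rho_{-})$, and form the common interpolating state $\Delta\propto\rho+\varepsilon\rho_{-}=\sigma+\varepsilon\rho_{+}$, which is reachable from both $\rho$ and $\sigma$ by mixing in a single state. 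Evaluating the conditional mutual information along this interpolation and invoking concavity of the conditional entropy, one localizes the dimensional penalty to a single subsystem instead of incurring it twice, collapsing the two separate $\log d$ contributions into one.

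The main obstacle is precisely this tightening: bare continuity of $I(A;B|C)$ is routine once the AFW bound is available, but extracting the coefficient $2$ (rather than $4$) on $\log d$ is the genuine technical content and requires the coupling/majorization construction together with a careful accounting of the entropy changes under the interpolation. I would finish by checking that every entropic quantity appearing in the interpolation step stays finite---guaranteed because only $\min\{\dim\mc{H}_A,\dim\mc{H}_B\}<\infty$ enters, with no finiteness demand on $C$---so that the manipulations remain legitimate in the infinite-dimensional-$C$ regime that the device-independent application requires.
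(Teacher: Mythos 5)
The paper itself contains no proof of this lemma --- it is imported directly from Ref.~\cite{Shi17} --- so the comparison is between your reconstruction and Shirokov's original argument, which is indeed the route you sketch. Your two main observations are both correct: writing $I(A;B|C)=H(A|C)-H(A|BC)$ (or its mirror with $A\leftrightarrow B$, which is what makes $d=\min\{\dim\mc{H}_A,\dim\mc{H}_B\}$ appear), using contractivity of the trace distance under partial trace, and applying the AFW bound termwise is perfectly rigorous but only yields $4\varepsilon\log d+2g(\varepsilon)$; the advertised constant requires the interpolation technique. One correction to your description of that step: the factor of $2$ is not recovered by ``invoking concavity of the conditional entropy'' to localize the penalty in a single subsystem. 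Rather, the AFW coupling argument is applied to the functional $f=I(A;B|C)$ as a whole: with $\rho-\sigma=\varepsilon(\rho_+-\rho_-)$ and the common state $\omega^{*}\propto\rho+\varepsilon\rho_-=\sigma+\varepsilon\rho_+$, one uses (i) that $f$ is locally almost affine, deviating from affinity under mixing by at most the binary entropy $h(p)$, and (ii) the uniform bound $0\le f\le 2\log d$ evaluated on the perturbation states $\rho_\pm$; this gives $|f(\rho)-f(\sigma)|\le \varepsilon\cdot 2\log d+2g(\varepsilon)$, with the dimensional term entering exactly once through the global bound and the $2g(\varepsilon)$ coming from the two $h$ corrections. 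Finally, in the regime the paper actually needs ($\mc{H}_C$ infinite-dimensional, since $C$ plays the role of Eve's system), the nontrivial point is not merely that the entropies stay finite but that $I(A;B|C)$ is well defined at all; Shirokov treats this definitional issue explicitly, and it is more than the finiteness check your sketch gestures at. With these caveats, your proposal is the correct proof strategy and faithfully matches the argument of the cited source.
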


\section{Secrecy monotones}\label{sec:app:B}
In this Appendix we revisit Theorem~3.1 of Ref.~\cite{CEHHOR} and generalize the result by relaxing the constraints on the Hilbert spaces in the following way. First, we prove an analogy to Lemma $A.1$ of Ref.~\cite{CEHHOR}.

\begin{lemma}[cf. \cite{CEHHOR}]
The maximization in the definition of $K_{DD}$~\eqref{eq:kdd} can be restricted to protocols that use communication at most linear in the number of copies of $\rho_{ABE}$. The eavesdropper system is not necessarily restricted to a finite dimension. 
\label{lem:CEHHOR}
\end{lemma}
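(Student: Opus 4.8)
The plan is to mirror the finite-dimensional argument of Lemma~A.1 of Ref.~\cite{CEHHOR} and then verify, step by step, that none of its estimates actually invoke the dimension of the eavesdropper system $E$. The reason this lemma is needed is the upper-bound scheme used in the proofs of Theorems~\ref{thm:bound_on_ddkey} and~\ref{thm:bound_for_tilda}: there one compares the monotone evaluated on $\rho^{\otimes n}$ with its value on the protocol output, which is $\varepsilon$-close to the ideal key state $\tau^{(d_n)}$, and the asymptotic-continuity correction one pays scales like $\varepsilon$ times the logarithm of the dimension of the \emph{honest} registers, including appended ancillas and the public communication registers. For this correction to be $O(\varepsilon n)$, so that it vanishes after dividing by $n$ and letting $\varepsilon \to 0$, the honest systems must stay of dimension $2^{O(n)}$ throughout, which is precisely what the restriction to communication linear in $n$ guarantees.

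First I would record the a priori bound on the key length: since the honest systems $A,B$ are finite-dimensional with $d = \min\{d_A, d_B\}$, after $n$ copies they have dimension $d^n$, so any output state $\varepsilon$-close to $\tau^{(d_n)}$ obeys $\log d_n \le n \log d$, and the distilled key is at most linear in $n$. Next I would invoke the structural reduction of LOPC protocols: any protocol in the supremum defining $K_{DD}$ can be brought to a standard form of alternating local CPTP maps and public broadcasts, and, following Ref.~\cite{CEHHOR}, a protocol achieving a given asymptotic rate can be replaced by one whose total public communication is $O(n)$ with no loss in the asymptotic rate. The point is that only $\le n\log d$ key bits ever need to be produced, so communication beyond linear order is never required to reach the supremal rate, and the honest registers can be kept of $\log$-dimension $O(n)$.

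The genuinely new ingredient, and the step I expect to demand the most care, is checking that this entire reduction is insensitive to $\dim \mathcal{H}_E$. The LOPC operations act only on the honest systems $A,B$ and on ancillas appended there, the public messages are generated by the honest parties, and $E$ merely holds a purification of $\rho_{AB}^{\otimes n}$ while passively receiving copies of the public transcript. Consequently every counting estimate bounding the communication is expressed purely in terms of the finite honest dimensions and the key length, so $\dim \mathcal{H}_E$ never enters. The two places where caution is needed are (i) that a purification $\psi^\rho$ of $\rho_{AB}^{\otimes n}$ exists in a possibly infinite-dimensional but separable Hilbert space, so that all states used downstream are well defined; and (ii) that the continuity estimates invoked later are of the form in Lemma~\ref{lem:Shirikov} and the AFW bound, whose corrections depend only on $\min\{\dim \mathcal{H}_A, \dim \mathcal{H}_B\}$ and therefore remain finite even when the conditioning/eavesdropper system is infinite-dimensional. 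Establishing (i) and (ii) is exactly what upgrades the original finite-dimensional lemma to the stated version in which $E$ may be of infinite dimension, and I would present these as the substantive content, treating the linear-communication count itself as a transcription of the argument in Ref.~\cite{CEHHOR}.
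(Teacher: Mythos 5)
Your high-level route coincides with the paper's: both proofs defer the construction itself to Lemma A.1 of Ref.~\cite{CEHHOR} and locate the entire infinite-dimensional issue in the entropic continuity estimates, which must be replaced by bounds whose error terms depend only on the finite honest registers. Your point (ii) --- invoking Lemma~\ref{lem:Shirikov} (and the AFW bound) in place of the original Alicki--Fannes inequality --- is exactly the correction the paper makes, so in that respect the proposal is on target.

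However, one sentence of your sketch, offered as the reason the reduction works, is a non sequitur: ``only $\le n\log d$ key bits ever need to be produced, so communication beyond linear order is never required to reach the supremal rate.'' Public communication is not bounded by the length of the distilled key --- error correction and interactive advantage distillation may consume far more communication per copy than the key they return --- so this cannot justify the linear-communication restriction. The actual mechanism in Ref.~\cite{CEHHOR}, which your sketch never makes explicit, is a block construction: fix $\epsilon>0$ and a block size $n_0$ so that the original protocol maps $\rho^{\otimes n_0}$ to a state $\sigma$ that is $\epsilon$-close to an ideal key state of length $l_{n_0}\ge n_0(R-\epsilon)$; run this protocol independently on many blocks (communication linear in the number of blocks, since $n_0$ is fixed), and then apply a one-way Devetak--Winter post-processing across the blocks, whose communication is also linear. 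The rate of that post-processing is $I(A:B)_\sigma - I(A:E)_\sigma$, and it is precisely here --- inside the proof of this lemma, not merely in its ``later'' applications --- that continuity with a possibly infinite-dimensional $E$ is needed: Lemma~\ref{lem:Shirikov} gives
\begin{equation}
I(A:B)_\sigma - I(A:E)_\sigma \ge l_{n_0}(1-4\epsilon) - 4g(\epsilon),
\end{equation}
whence the modified, linear-communication protocol achieves $\widetilde{R}\ge (1-4\epsilon)(R-\epsilon) - 4g(\epsilon)/n_0$, which tends to $R$ as $\epsilon\to 0$ and $n_0\to\infty$. Once your heuristic is replaced by this block-plus-Devetak--Winter argument, your outline becomes the paper's proof.
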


\begin{proof}
The proof of Lemma \ref{lem:CEHHOR} goes along the lines of the proof of Lemma $A.1$ in Ref.~\cite{CEHHOR}. The change that is necessary to allow the eavesdropper to hold the system of infinite dimension is the use of asymptotic continuity of the conditional mutual information of  Ref.~\cite{Shi17} (see Lemma~\ref{lem:Shirikov} herein) instead of the Alicki--Fannes inequality. This results in:
\begin{align}
    I(A:B)_\sigma-I(A:E)_\sigma \ge l_{n_0} (1-4\epsilon) -4g(\epsilon),
\end{align}
where $l_{n_0}$ is the length of the output of a distillation protocol using $n_0$ copies of the input state. The state $\sigma$ is the output of the latter protocol. The overall key rate of the modified protocol which has linear communication admits then a lower bound
\begin{align}
    \widetilde{R} \ge (1-4\epsilon)(R-\epsilon)-\frac{4g(\epsilon)}{n_0}.
\end{align}
The other parts of the proof are not altered.
\end{proof}

\begin{lemma}[cf.~\cite{CEHHOR}] \label{lem:cehhor}
Let $E(\rho)$ be a function mapping a tripartite quantum state $\rho_{ABE}$ into positive numbers such that the following hold:
(a) monotonicity, i.e., $E(\Lambda(\rho))\leq E(\rho)$ for any LOPC $\Lambda$; 
(b) asymptotic continuity, i.e., for any states $\rho^{n}$ and $\sigma^{n}$ on $\mc{H}_{A}\otimes\mc{H}_B\otimes\mc{H}_E$, the condition $\norm{\rho^n-\sigma^n}_1\to 0$ implies $\frac{1}{\log_2  r_n}\abs{E(\rho^n)-E(\sigma^n)}\to 0$ where $r_n=\dim(\mc{H}^n_A)$; and
(c) normalization, i.e., $E(\tau^{(l)})=l$.\\
Then the regularization of the function $E$ given by $E^{\infty}(\rho)=\limsup_{n\to\infty}\frac{M(\rho^{\otimes n})}{n}$ is an upper bound on the device-dependent key distillation rate $K_{DD}$, i.e., $E^{\infty}(\rho_{ABE})\geq K_{DD}(\rho_{ABE})$ for all $\rho_{ABE}$ with $\dim_{A}<\infty$, if in addition $E$ satisfies 
(d) subadditivity on tensor products: $E(\rho^{\otimes n})\leq n E(\rho)$; then $E$ is an upper bound on $K_{DD}$. 

\label{lem:infinite_dim}
\end{lemma}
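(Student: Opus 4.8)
The plan is to run the standard secrecy-monotone argument — monotonicity, asymptotic continuity and normalization together force $E^\infty$ to dominate every achievable key rate — following the proof of Theorem~3.1 of Ref.~\cite{CEHHOR} almost verbatim, the one modification being that continuity is invoked in a form insensitive to $\dim\mc{H}_E$.

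First I would fix an arbitrary rate $R<K_{DD}(\rho_{ABE})$ and, from the definition in Eq.~\eqref{eq:kdd}, extract qLOPC protocols $\Lambda_n$ with $\Lambda_n(\rho^{\otimes n})\approx_{\epsilon_n}\tau^{(d_n)}$, $\epsilon_n\to0$, and $\frac{1}{n}\log d_n\to R$. By the linear-communication lemma stated just above I may assume the honest output dimension obeys $\log r_n\le Cn$ for some constant $C$, with $r_n=\dim(\mc{H}^n_A)$; this is the bound that keeps the continuity correction $o(n)$.

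I would then chain the three hypotheses. Monotonicity gives $E(\rho^{\otimes n})\ge E(\Lambda_n(\rho^{\otimes n}))$; asymptotic continuity applied to the $\epsilon_n$-close pair $\Lambda_n(\rho^{\otimes n})$ and $\tau^{(d_n)}$ gives $\abs{E(\Lambda_n(\rho^{\otimes n}))-E(\tau^{(d_n)})}\le\delta_n\log r_n$ with $\delta_n\to0$; and normalization gives $E(\tau^{(d_n)})=\log d_n$. Assembling,
\begin{equation}
E(\rho^{\otimes n})\ge\log d_n-\delta_n\log r_n\ge\log d_n-C\delta_n n,
\end{equation}
dividing by $n$ and taking $\limsup$ yields $E^\infty(\rho)\ge R$; as $R$ was arbitrary below $K_{DD}$, the regularized bound $E^\infty(\rho_{ABE})\ge K_{DD}(\rho_{ABE})$ follows. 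If $E$ is in addition subadditive on tensor powers, then $E^\infty(\rho)\le E(\rho)$, upgrading the bound to $E$ itself.

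The main obstacle — and the reason for restating the result — is the continuity step. In the finite-dimensional original one bounds the variation of $E$ by an Alicki--Fannes inequality, which presupposes $\dim\mc{H}_E<\infty$. Here I would instead invoke the Shirokov bound of Lemma~\ref{lem:Shirikov}, so each conditional-mutual-information term composing $E$ is controlled by $\log d_{A_i}$ and $g(\epsilon)$ alone, with no dependence on $\dim\mc{H}_E$; since the asymptotic-continuity hypothesis is already normalized by $\log r_n$ — a function of the finite honest systems only — the estimate survives when the adversary's system is infinite-dimensional, which is exactly the relaxation claimed in the statement.
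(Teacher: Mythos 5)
Your proposal is correct and follows essentially the same route as the paper, which simply defers to the proof of Theorem~3.1 of Ref.~\cite{CEHHOR} and observes that only $\dim(\mc{H}_A)$ need be finite; your chain (monotonicity, then the continuity hypothesis with correction $\delta_n\log r_n$ kept $o(n)$ by the linear-communication lemma, then normalisation, then subadditivity for the single-letter upgrade) is exactly that argument. One small imprecision: within this abstract lemma the Shirokov bound is not applied to $E$ itself (which is an arbitrary function, not a sum of conditional-mutual-information terms) --- it enters only in the proof of the preceding linear-communication lemma, replacing Alicki--Fannes, while the abstract proof needs nothing beyond the hypothesis that continuity is normalised by $\log r_n$, a quantity depending only on the honest system.
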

\begin{proof}
The proof arguments are same as those stated in Ref.~\cite{CEHHOR} with relaxation on the Hilbert space of $E$. We observe that the proof arguments hold even when there is no restriction on the $\dim(\mc{H}_E)$, i.e., $E$ can be finite dimensional or infinite dimensional. It suffices to have $\dim(\mc{H}_A)$ be finite dimensional.
\end{proof}

\bibliography{revision}{}
\end{document}